\numberwithin{equation}{section}
\theoremstyle{plain}
\newtheorem{theorem}{Theorem}[section]
\newtheorem{lemma}[theorem]{Lemma}
\newtheorem{coro}[theorem]{Corollary}
\theoremstyle{definition}
\newtheorem{define}[theorem]{Definition}
\newtheorem{remark}[theorem]{Remark}
 \newcommand{\sgn}[1]{\operatorname{sgn} #1}
\begin{document}

\title[DP peakon and C-Toda lattices]{Degasperis-Procesi peakon dynamical system and finite Toda lattice of CKP type} % isospectral deformations of tau-functions related to Cauchy kernel
\author{Xiang-Ke Chang}
\address{ LSEC, ICMSEC, Academy of Mathematics and Systems Science, Chinese Academy of Sciences, P.O.Box 2719, Beijing 100190, PR China; and School of Mathematical Sciences, University of Chinese Academy of Sciences, Beijing 100049, PR China}
\email{changxk@lsec.cc.ac.cn}
%\thanks{}

\author{Xing-Biao Hu}
\address{LSEC, ICMSEC, Academy of Mathematics and Systems Science, Chinese Academy of Sciences, P.O.Box 2719, Beijing 100190, PR China; and School of Mathematical Sciences, University of Chinese Academy of Sciences, Beijing 100049, PR China}
\email{hxb@lsec.cc.ac.cn}
%\thanks{}

\author{Shi-Hao Li}
\address{LSEC, ICMSEC, Academy of Mathematics and Systems Science, Chinese Academy of Sciences, P.O.Box 2719, Beijing 100190, PR China; and School of Mathematical Sciences, University of Chinese Academy of Sciences, Beijing 100049, PR China}
\email{lishihao@lsec.cc.ac.cn}
%\thanks{}

\subjclass[2010]{37K10,  35Q51, 15A15}
\date{}

\dedicatory{}

\keywords{Degasperis-Procesi equation, Multipeakons, Toda lattice of CKP type}

\begin{abstract} 
In this paper, we propose a finite Toda lattice of CKP type (C-Toda) together with a Lax pair. Our motivation is based on the fact that the Camassa-Holm (CH) peakon dynamical system and the finite Toda lattice may be regarded as opposite flows in some sense. As an intriguing analogue to the CH equation, the Degasperis-Procesi (DP) equation also supports the presence of peakon solutions. Noticing that the peakon solution to the DP equation is expressed in terms of bimoment determinants related to the Cauchy kernel, we impose opposite time evolution on the moments and derive the corresponding bilinear equation.  {The corresponding quartic representation is shown to be a continuum limit of a discrete CKP equation, due to which we call the obtained equation finite Toda lattice of CKP type. Then, a nonlinear version of the C-Toda lattice together with a Lax pair is derived. }As a result,   
it is shown that the DP peakon lattice and the finite C-Toda lattice form opposite flows under certain transformation.

%In the literature, it has been revealed that the Camassa-Holm (CH) peakon dynamical system and the finite Toda lattice may be regarded as opposite flows.  As an intriguing analogue to the CH equation, the Degasperis-Procesi (DP) equation also supports the presence of peakon solutions. A natural question arises: does there exist a corresponding lattice of Toda type for the DP peakon lattice as the CH peakon and Toda lattices do?  In this paper, our aim is to give an answer to this question.  Noticing that the tau function of the DP peakon lattice is expressed in terms of bimoment determinants related to the Cauchy kernel, we impose opposite time evolution on the moments and derive the corresponding bilinear equations. By introducing appropriate nonlinear variables, a novel Toda lattice of CKP type together with a Lax pair is obtained. As a result, we give a unified picture for the CH peakon and Toda, Novikov peakon and B-Toda, DP peakon and C-Toda lattices. 

%\textcolor{blue}{Last of all, asymptotic behavior of the finite C-Toda lattice at large time  is analyzed.}
%
\end{abstract}

\maketitle
%\tableofcontents
\section{Introduction}
The celebrated Toda lattice and the Camassa-Holm (CH) equation have both attracted great attention in the course of the development on integrable systems. The Toda lattice was originally introduced by Toda \cite{toda1967vibration} as a simple model for describing a chain of particles with nearest neighbor exponential interaction and later it was frequently studied under the Flaschka's form \cite{date1976analogue,dubrovin1976non,flaschka1974toda1,moser1975finitely}. The CH equation, named after Camassa and Holm \cite{camassa1993integrable}, arises as a shallow water wave model and owns many interesting properties  \cite{beals2000multipeakons,constantin1998wave,constantin2000stability,holden2008global,lenells2004stability}. One of the most attractive characters is that it admits a kind of soliton solutions with peaks (or called peakons), whose dynamics could be described by a system of ODEs. These special solutions 
seem to capture main attributes of solutions of the CH equation: the breakdown of regularity which can be interpreted as collisions of peakons, and the nature of 
long time asymptotics which can be loosely described as peakons becoming 
free particles in the asymptotic region \cite{beals2000multipeakons}. Interestingly, there exists certain intimate connection between the CH peakon lattice (i.e. the ODE system describing the CH peakons)  and the finite Toda lattice \cite{beals2001peakons,ragnisco1996peakons}.

%such as describing breaking waves \cite{constantin1998wave,holden2008global} and admitting explicitly solvable $N$-peakons \cite{beals1999multipeakons,beals2000multipeakons,eckhardt2014isospectral}, which are orbitally stable in some sense \cite{constantin2000stability,lenells2004stability}.

The finite Toda lattice, i.e. the case of a chain of finitely many particles ($x_0=-\infty, x_{n+1}=\infty$), was investigated by Moser \cite{moser1975finitely}, who employed inverse spectral method to analyze the spectral and inverse spectral problems related to Jacobi matrix. It was shown by Beals, Sattinger and Szmigielski \cite{beals2000multipeakons} that, the spectral problem of the CH peakon lattice is related to a finite discrete string problem, which can also be solved by use of inverse spectral method.  The explicit formulae of the solutions can both be expressed in terms of Hankel determinants with moments of discrete measures. In a follow-up work \cite{beals2001peakons}, Beals, Sattinger and Szmigielski indicated that there exists a bijective map from a discrete string problem with positive weights to Jacobi matrices, which allows the pure peakon flow of the CH equation to be realized as an isospectral Jacobi flow as well. This gives a unified picture for the Toda and the CH peakon flows. Indeed, this also implies that the CH peakon and Toda lattices can be viewed as opposite flows. Please refer Appendix \ref{app:CHA} for some details.

Peakons have received much attention in the recent two decades. 
In addition to the CH equation,  there subsequently appear many integrable systems with the presence of peakon solutions, among which, the Degasperis-Procesi (DP) equation \cite{degasperis1999asymptotic} and the Novikov equation \cite{hone2008integrable,novikov2009generalisations} are two of the most widely studied equations.  
The dynamics of DP and Novikov peakons are also governed by the respective ODE systems, which can also be explicitly solved by employing inverse spectral method \cite{lundmark2005degasperis,hone2009explicit}. In the search of solutions, a discrete cubic string problem and its dual problem are involved respectively. Eventually, the solutions can be expressed in terms of some bimoment determinants related to Cauchy biorthogonal polynomials. 
A natural question is whether there exist the corresponding opposite flows to DP and Novikov peakon lattices or not.

The answer is positive. In \cite{chang2017application}, the result for the Novikov peakon case is reported. It is shown that the Novikov peakon dynamical system is connected with the finite Toda lattice of BKP type (B-Toda lattice) and their solutions are related to the partition function of Bures ensemble with discrete measure (The readers are invited to see Appendix \ref{app:NVB} for a short summary). This paper is mainly devoted to dealing with the DP peakon case. It turns out that a Toda lattice of CKP type is an opposite flow to the DP peakon lattice. The Toda lattice of CKP type is shown to be Lax integrable and it seems novel to our knowledge. In fact, this equation is of interests in the study of classic integrable system since it can be regarded as a continuum limit of the discrete CKP equation by Bobenko and Schief \cite{bobenko2015discrete,bobenko2017discrete,schief2003lattice}.

%{The answer is positive. This paper is devoted to dealing with the DP peakon case, and it turns out that a Toda lattice of CKP type is an opposite flow to the DP peakon lattice. The Toda lattice of CKP type is shown to be Lax integrable and it seems novel to our knowledge. The corresponding result for the NV peakon case is reported seperately in \cite{chang2017application}, where it is shown that the NV peakon lattice could be connected with the finite Toda lattice of BKP type (B-Toda lattice).  (The readers are invited to see Appendix \ref{app:NVB} for a short summary). }

%For the NV peakon case, an affirmative answer is reported in \cite{chang2017application}. The NV peakon lattice is shown to be connected with the finite Toda lattice of BKP type (B-Toda lattice).  The authors achieve the goal by observing that the closed form of the NV peakon solution also admits the Pfaffian expression. By employing Pfaffian technique, the authors first derive the bilinear equations satisfied by the tau function with moment evolution in a positive direction, and then obtain the nonlinear flows under certain variable transformations.  (The readers are invited to see Appendix \ref{app:NVB} for a short summary).  However, the DP peakon case has not been addressed yet.

According to different types of infinite dimensional Lie groups and the corresponding Lie algebras, classical integrable systems may be classified into  AKP, BKP, CKP types\footnote{Modifications of the Kadomtsev--Petviashvili hierarchy corresponding to Lie algebras of type A, B, C} etc., whose transformation groups are $GL(\infty)$, $O(\infty)$ and $Sp(\infty)$ etc., respectively \cite{date1981kp,jimbo1983solitons}. In fact, the CH equation has a hodograph link  to the first negative flow in the hierarchy of the Korteweg-de Vries (KdV) equation \cite{fuchssteiner1996some}, which belongs to AKP type \cite{hirota2004direct,jimbo1983solitons}.   From the view of the tau-function, the solutions of AKP type admit the closed forms in terms of determinants. It makes sense that determinants appear in the CH peakons.   Besides, there is a hodograph link between the Novikov equation and a negative flow in the Sawada-Kotera (SK) hierarchy (BKP type) \cite{hone2008integrable} and the solutions of BKP type equations are usually expressed not as determinants but as Pfaffians \cite{hirota2004direct}. {Furthermore, the DP equation corresponds to a negative flow in the Kaup-Kupershmidt (KK)  hierachy \cite{degasperis2002new} belonging to CKP type. Due to their respectively intrinsic structures, the challenges we encounter become completely different. The objects we deal with for the DP case here are the complicated Cauchy bimoment determinants, which admit multiple integral representations as the partition function of the Cauchy two-matrix model \cite{bertola2009cauchy,bertola2014cauchy}.
%This explains why Pfaffians were introduced in the search of the opposite flows of the NV peakon lattice \cite{chang2017application} and it works.

%Recall that the DP equation corresponds to a negative flow in the Kaup-Kupershmidt (KK)  hierachy \cite{degasperis2002new} belonging to CKP type, whose structure is a bit different from BKP type. 
%This paper is to give an answer to the DP peakon case, and it turns out that a Toda lattice of CKP type is an opposite flow to the DP peakon lattice. The Toda lattice of CKP type is shown to be Lax integrable and it seems novel to our knowledge. 

The paper is arranged as follows. In Section \ref{sec:dp}, we give more background on the DP equation and its peakon solutions. And we provide a direct confirmation for the DP peakons by determinant technique so that the DP peakon lattice may be viewed as an isospectral flow on a manifold cut out  by determinant identities. Section \ref{sec:ctoda} is mainly used to search for an opposite flow of the DP peakon lattice. We achieve the goal by use of determinant technique and a novel lattice of Toda type (\ref{eq:ctoda}) is constructed.  Theorem \ref{th:dp_ctoda} summarises the connection between them.{ In Section \ref{sec:schief}, we explain why the obtained lattice is of CKP type by uncovering its connection with a discrete CKP equation studied by Bobenko and Schief \cite{bobenko2015discrete,bobenko2017discrete,schief2003lattice}. } Last of all, the concluding remark is given in Section \ref{sec:con}, where a unified picture is presented for the CH peakon and Toda, Novikov peakon and B-Toda, DP peakon and C-Toda lattices (see Table \ref{comp_abc}).

\section{DP peakons}\label{sec:dp}
In this section, we give more background on the DP equation and its multipeakon solutions, and then reformulate the multipeakons in our own way.

\subsection{On DP equation}
The Degasperis--Procesi (DP) equation
\begin{align}
m_t+(um)_x+2u_xm=0,\qquad m=u-u_{xx}\label{eq:DP}
\end{align}
 was found by Degasperis and Procesi \cite{degasperis1999asymptotic} to pass the necessary (but not sufficient) test of asymptotic integrability, and later shown by Degasperis, Holm, and Hone \cite{degasperis2002new} to be integrable indeed  in sense of Lax pair, bi-Hamiltonian structure, infinitely many conservation laws, etc.

As a modification of the CH equation
\begin{align*}
m_t+(um)_x+u_xm=0,\qquad m=u-u_{xx},
\end{align*}
the DP equation may also be regarded as a model for the propagation of shallow water waves and a rigorous explanation was given in \cite{Constantin-Lannes}. It also admits peakon solutions \cite{lundmark2003multi,lundmark2005degasperis}. In \cite{liu2,liu1}, the authors present important results regarding stability of DP peakons, and \cite{sz2,sz1} deal with collisions of DP peakons. Furthermore, the study on the DP peakon problem induces new questions regarding Nikishin systems \cite{bertola2010cauchy}  studied in approximation theory, and random two-matrix models \cite{bertola2009cauchy,bertola2014cauchy}.  
Despite its superficial similarity to the CH equation, the DP equation has in addition shock solutions \cite{coclite-karlsen-DPwellposedness, coclite-karlsen-DPuniqueness, lundmark-shockpeakons} (see \cite{lundmark-shockpeakons} for the onset of shocks in the form of shockpeakons).  Indeed, there has been considerable interest in the DP equation. For other references, see e.g.  \cite{constantin2012inverse,escher2006global,feng13,feng17,lenells2005traveling,matsuno2005n} etc. (It was not our purpose trying to be exhaustive. Thus, we beg indulgence for the numerous omissions it certainly contains.)
 
%Despite superficial similarity to the CH equation
%\begin{align*}
%m_t+(um)_x+u_xm=0,\qquad m=u-u_{xx},
%\end{align*}
%the DP equation has not only peakon solutions \cite{lundmark2003multi,lundmark2005degasperis} but also shock solutions \cite{coclite-karlsen-DPwellposedness, coclite-karlsen-DPuniqueness, lundmark-shockpeakons}.  In \cite{liu1, liu2}, the authors present important results regarding stability of DP peakons, and \cite{sz1, sz2} deal with collisions of DP peakons and see \cite{lundmark-shockpeakons} for the onset of shocks in the form of shockpeakons.  
%For a discussion of weak solutions see \cite{liu-escher}. Furthermore, the peakon sector of the DP equation leads to new questions 
%regarding Nikishin systems \cite{ns}  studied in approximation theory 
%\cite{ls-cubicstring, bertola-gekhtman-szmigielski:cauchy}. 
%In short, there has been considerable interest in the DP equation. 
 %More recently, Constantin and Lannes [21] give a rigorous proof that both the CH equation and the DP equation are valid approximations to the governing equations for water waves and also show the relevance of these two equations as models for the propagation of shallow water waves. 
 
 In the following,  we will focus on the explicit construction of DP multipeakons in \cite{lundmark2005degasperis}, where the pure multipeakon case was rigorously analyzed by use of inverse spectral technique.

%By the compatibility condition of the linear system, one is led to the DP equation \eqref{eq:DP}. 

\subsection{Review of the work by Lundmark \& Szmigielski}  
When the multipeakon ansatz 
\begin{align}
u(x,t)=\sum_{j=1}^nm_j(t)e^{-|x-x_j(t)|}
\end{align}
is taken into account, it follows from \eqref{eq:DP} that $m$ can be regarded as a discrete measure
$$
m(x,t)=2\sum_{k=1}^nm_k(t)\delta(x-x_k(t)).
$$
By using distributional calculus, it is known that  the first equation of \eqref{eq:DP} is satisfied in a weak sense if the positions $(x_1,\ldots,x_n)$ and momenta $(m_1, \ldots , m_n)$ of the peakons obey the following system of $2n$ ODEs \cite{degasperis2002new,lundmark2005degasperis}:
\begin{align}
\dot x_k=u(x_k)=\sum_{j=1}^nm_je^{-|x_j-x_k|}, \qquad \dot m_k=-2\langle u_x\rangle(x_k)=2\sum_{j=1}^n\sgn(x_k-x_j)m_je^{-|x_j-x_k|}, \label{DP_eq:peakon}
\end{align}
where $\langle \cdot \rangle(x_j)$ denotes the arithemetic average of left and right limits at the point $x_j$. Recall that the DP equation admits the Lax pair
\begin{subequations}
 \begin{align}
 &(\partial_x-\partial_x^3)\psi=zm\psi,\label{dp_lax_x}\\
 &\psi_t=[z^{-1}(1-\partial_x^2)+u_x-u\partial_x]\psi.\label{dp_lax_t}
 \end{align}
\end{subequations}
Due to Lax integrability in the peakon sector, Lundmark and Szmigielski \cite{lundmark2003multi,lundmark2005degasperis} employed inverse spectral method to give an explicit construction of DP multipeakons. Now let's sketch their idea below.

Firstly, for the initial data $\{x_k(0),m_k(0)\}_{k=1}^n$ satisfying 
$$x_1(0)<x_2(0)<\cdots<x_n(0),\qquad m_k(0)>0,$$
they considered a discrete cubic string problem in a finite interval $[-1,1]$ related to the linear spectral problem \eqref{dp_lax_x} by a Liouville transformation \cite[Th. 3.1]{lundmark2005degasperis}, 
\begin{align*}
&− \phi_{yyy}(y) = zg(y)\phi(y), \qquad y\in (-1,1)\\ 
& \phi(−1) = \phi_y(−1) = 0,\quad  \phi(1) = 0,
\end{align*}
where 
$$g(y)=\sum_{k=1}^ng_k\delta(y-y_k),$$
with
\begin{equation}
g_k=8m_k\cosh^4  \frac{x_k}{2}>0, \ \ \ y_k=\tanh \frac{x_k}{2}.
\end{equation}
It was shown,  in \cite[Th. 3.3]{lundmark2005degasperis}, that  the discrete cubic string has $n$ distinct positive eigenvalues $\{\zeta_k\}_{k=1}^n$. By introducing a pair of Weyl functions, the extended spectral data $\{\zeta_k,a_k\}_{k=1}^n$ were found (see  \cite[Th. 3.5]{lundmark2005degasperis} and note that we omit the data $c_j$ originally appearing in their paper since $c_j$ can be explicitly expressed in terms of  $\zeta_j$ and $a_j$, which is described in \cite[Coro. 3.6]{lundmark2005degasperis}). 

Then, it turns out that there is a bijection between discrete cubic strings with $\{g_k,y_k\}_{k=1}^n$ 
restricted by
$$g_k>0,\qquad -1=y_0<y_1<y_2<\cdots<y_n<y_{n+1}=1$$
and the spectra data $\{\zeta_k,a_k\}_{k=1}^n$ 
satisfying
$$a_k>0,\qquad 0<\zeta_1<\zeta_2<\cdots<\zeta_n.$$
In fact, the inverse spectral mapping \cite[Th. 4.16]{lundmark2005degasperis} could be formulated explicitly by
\begin{equation}\label{exp:gy_uvw}
g_{k'}=\frac{(U_k+V_{k-1})^4}{2W_{k-1}W_k},\qquad y_{k'}=\frac{U_k-V_{k-1}}{U_k+V_{k-1}},
\end{equation}
with the index abbreviations $k'=n+1-k,\ k=1,2,\dots, n$. Here $U_k,V_k,W_k$ are defined as
\begin{align}
U_k=\sum_{I\in{\left(\substack{[1,n]\\k}\right)}}\frac{\Delta_I^2}{\Gamma_I}a_I, \quad  V_k=\sum_{I\in{\left(\substack{[1,n]\\k}\right)}}\frac{\Delta_I^2}{\Gamma_I}\zeta_Ia_I,\quad W_k=V_kU_k-V_{k-1}U_{k+1} \label{exp:uvw}
\end{align}
with the notations 
 \begin{equation*}
    a_I=\prod_{j\in I}a_j,
    \qquad
    \zeta_I^l=\prod_{j\in I}\zeta_j^l,
    \qquad
    \Delta_I=\prod_{i,j\in I, \, i<j}(\zeta_j-\zeta_i),\qquad
    \Gamma_I=\prod_{i,j\in I, \, i<j}(\zeta_j+\zeta_i).
  \end{equation*}
and
  \begin{equation*}
    \binom{[1,K]}{k}=\{J=\{j_1,\dots,j_k\} : 1\leq j_1<\dots<j_k\leq K\}
  \end{equation*}
denoting the set of $k$-element subsets
  $J=\{j_1<\dots<j_k\}$ of the integer interval
  $[1,K]=\{1,\dots,K\}$. It is noted that we have the convention that $U_0=V_0=W_0=1$ and $U_k=V_k=W_k=0$ for $k<0$ and $k>n$. And it is obvious that $U_k>0,\ V_k>0$ for $1\leq k\leq n$. The positivity of  $W_k$ for $1\leq k \leq n$ is claimed in \cite[Lemma 2.20]{lundmark2005degasperis}.

Moreover, the $t$ part of the Lax pair  \eqref{dp_lax_t} implies that the spectral data evolve linearly \cite[Th. 2.15]{lundmark2005degasperis},  that is,
$$\dot\zeta_k=0,\qquad \dot a_k=\frac{a_k}{\zeta_k},$$
so that $\zeta_k$ are positive constants and $a_k(t)=a_k(0)e^{\frac{t}{\zeta_k}}>0$.

To sum up, it is not hard to conclude that 

\begin{theorem}[Lundmark \& Szmigielski {\cite[Th. 2.23]{lundmark2005degasperis}}]\label{th:DP_solution}
The DP equation admits the (at least local) n-peakon solution  of the form
\begin{equation*}
u=\sum_{k=1}^n m_k(t)e^{-|x-x_k(t)|},
\end{equation*}
where
\begin{equation}\label{sol:DPxt_form}
x_{k'}=\log\frac{U_k}{V_{k-1}},\qquad m_{k'}=\frac{(U_k)^2(V_{k-1})^2}{W_kW_{k-1}},
\end{equation}
for the index $k'=n+1-k,\ k=1,2,\dots, n$. Here
\begin{align*}
U_k=\sum_{I\in{\left(\substack{[1,n]\\k}\right)}}\frac{\Delta_I^2}{\Gamma_I}a_I, \quad  V_k=\sum_{I\in{\left(\substack{[1,n]\\k}\right)}}\frac{\Delta_I^2}{\Gamma_I}\zeta_Ia_I,\quad W_k=V_kU_k-V_{k-1}U_{k+1}
\end{align*}
with the constants $\zeta_j$ and $a_j(t)$ satisfying
\begin{equation}\label{DP_evl:b}
0<\zeta_1<\zeta_2<\cdots\zeta_n,\qquad 
\dot a_j(t)=\frac{a_j(t)}{\zeta_j}>0.
\end{equation}
\end{theorem}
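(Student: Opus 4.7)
The plan is to follow the inverse spectral scheme already laid out in the paragraphs preceding the theorem. First I would fix initial data $\{x_k(0),m_k(0)\}$ with $x_1(0)<\cdots<x_n(0)$ and all $m_k(0)>0$, and verify that substituting the peakon ansatz into the DP equation \eqref{eq:DP} reduces it in the distributional sense to the ODE system \eqref{DP_eq:peakon}. This step uses the regularization $\langle u_x\rangle(x_k)$ to handle the jump of $u_x$ at each peakon position.

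Second, I would transfer the $x$-part of the Lax pair \eqref{dp_lax_x} to a discrete cubic string on $[-1,1]$ via the Liouville transformation $y=\tanh(x/2)$, obtaining weights $g_k=8m_k\cosh^4(x_k/2)$ at nodes $y_k=\tanh(x_k/2)$. From \cite[Th.~3.3]{lundmark2005degasperis} the string has $n$ distinct positive eigenvalues $\zeta_1<\cdots<\zeta_n$, and the Weyl functions at $y=\pm 1$ supply residues $a_k>0$, producing the spectral data $\{\zeta_k,a_k\}$. The $t$-part \eqref{dp_lax_t} is designed so that on the spectral side $\zeta_k$ is stationary and $\dot a_k=a_k/\zeta_k$, which gives the linear evolution \eqref{DP_evl:b}.

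Third, I would invoke the inverse-spectral reconstruction \eqref{exp:gy_uvw}, which expresses the weights and nodes of the cubic string through the Cauchy-type bimoment determinants $U_k,V_k,W_k$ built from $\zeta_j,a_j$. Inverting the Liouville map then yields
\begin{align*}
x_{k'}=\log\frac{1+y_{k'}}{1-y_{k'}}=\log\frac{U_k}{V_{k-1}},\qquad m_{k'}=\frac{g_{k'}}{8\cosh^4(x_{k'}/2)}=\frac{U_k^2V_{k-1}^2}{W_kW_{k-1}},
\end{align*}
where the second reduction uses the identity $\cosh(x_{k'}/2)=(U_k+V_{k-1})/(2\sqrt{U_kV_{k-1}})$ together with $g_{k'}=(U_k+V_{k-1})^4/(2W_{k-1}W_k)$. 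Positivity of $U_k$ and $V_k$ is immediate from \eqref{exp:uvw}, while positivity of $W_k$ is the content of \cite[Lemma~2.20]{lundmark2005degasperis}; hence the formulas are well-defined, and the ordering of the $x_k$ is preserved by the monotonicity of $\tanh$. Since the map intertwines the peakon flow with the linear spectral flow by construction, the reconstructed $\{x_k(t),m_k(t)\}$ automatically satisfy \eqref{DP_eq:peakon}, and therefore the ansatz produces a weak $n$-peakon solution of DP at least on the interval of existence where the $W_k$ remain positive.

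The main obstacle is the inverse spectral step itself: proving that the Cauchy-type bimoment determinants $U_k,V_k,W_k$ genuinely invert the spectral map for the cubic string. This rests on the theory of Cauchy biorthogonal polynomials, several nontrivial determinantal identities, and the positivity of $W_k$, and constitutes the technical heart of \cite{lundmark2005degasperis}. The remaining ingredients — the distributional reduction, the Liouville change of variables, and the linear time evolution of the $a_k$ — are then relatively routine verifications that fit together to yield Theorem~\ref{th:DP_solution}.
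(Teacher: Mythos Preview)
Your proposal is correct but follows the original Lundmark--Szmigielski inverse-spectral argument rather than the paper's own proof. The paper, in Section~2.4, gives an \emph{alternative} direct confirmation: it rewrites the formulas \eqref{sol:DPxt_form} in terms of the bimoment determinants $F_k^{(i,j)},G_k^{(i,j)}$ via Lemma~\ref{lem:FGUVW} and Corollary~\ref{coro:FG}, computes $\dot F_k^{(i,j)}=-E_{k+1}^{(i,j)}$ (Lemma~\ref{lem:der_F}), telescopes the peakon sums using Corollary~\ref{coro:idsum}, and then reduces both ODEs in \eqref{DP_eq:peakon} to instances of the Jacobi-type bilinear identities \eqref{id:bi4}--\eqref{id:bi6}. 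No spectral theory, Liouville map, or cubic-string inversion is invoked; the argument is a pure determinantal verification.

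Your route explains where the formulas come from and why the spectral data evolve linearly, but it relies on the heavy machinery of \cite{lundmark2005degasperis} (the Hermite--Pad\'e approximation and the explicit inverse map). The paper's route is self-contained once the determinants are set up, and --- more to the point for this paper --- it recasts the DP peakon lattice as a flow on a manifold cut out by determinant identities, exactly the viewpoint used in Section~\ref{sec:ctoda} to build the opposite C-Toda flow.
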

\begin{remark}
Let 
$$
\mathcal{P}=\{(x_j,m_j)\ |\ x_1<x_2<\cdots<x_n,\ \ \  m_j>0,\ \ \  j=1,2,\dots,n\}.
$$
 It was shown that, if the initial data are in the space $\mathcal{P}$, then $x_j(t),m_j(t)$ will exist and remain in the space  $\mathcal{P}$ for all the time $t\in \bf R$ under the peakon flow \eqref{DP_eq:peakon}. In other words, Theorem \ref{th:DP_solution} describes a global n-peakon solution.
\end{remark}
\begin{remark}
Since all the momenta $m_j$ are positive, Theorem \ref{th:DP_solution} gives the pure n-peakon solution. It is not hard to formulate the result for the negative momenta by using a trivial transformation, which corresponds to the pure antipeakon case.
\end{remark}

\subsection{Determinant formulae}  The explicit construction of the inverse mapping \eqref{exp:gy_uvw} in \cite{lundmark2005degasperis} is not trivial and somewhat intricate. Its solution involves the idea of solving a Hermite-Pad\'e type approximation problem, which has a unique solution and could be expressed in terms of certain bimoment determinants with respect to the Cauchy kernel. And then, evaluations of the bimoment determinants lead to the explicit formulae \eqref{exp:gy_uvw}. In this subsection, the bimoment determinants in terms of Cauchy kernel are introduced and some formulae are derived, which are helpful for us to deduce the DP peakon lattice and the C-Toda lattice and their connection. It is noted that some of the formulae have appeared in \cite{lundmark2005degasperis}, while most of them are new.
Also note that the time dependence is not involved in this subsection.

Introduce the bimoments involving the Cauchy kernel
\begin{equation*}
I_{i,j}=I_{j,i}=\iint_{\mathbb{R}_+^2} \frac{x^iy^j}{x+y}d\mu(x) d\mu(y)
\end{equation*}
and the single moments 
\begin{equation}
\alpha_i=\int_{\mathbb{R_+}} x^id\mu(x),
\end{equation}
where $\mu$ is a discrete measure on $\mathbb{R}_+$
\[
\mu=\sum_{p=1}^na_p\delta_{\zeta_p}, \qquad  \text{with} \qquad 0<\zeta_1<\zeta_2<\cdots<\zeta_n,\quad a_p>0.
\]
Obviously, the bimoments $I_{i,j}$ and the single moments $\alpha_i$ admit the explicit expressions as
\[
I_{i,j}=I_{j,i}=\sum_{p=1}^n\sum_{q=1}^n\frac{\zeta_p^i\zeta_q^j}{\zeta_p+\zeta_q}a_pa_q,\qquad \alpha_i=\sum_{p=1}^n\zeta_p^ia_p.
\]
Let us consider the following determinants ($F_k^{(i,j)},G_k^{(i,j)},E_k^{(i,j)}$) with the bimoments $I_{i,j}$ and the single moments $\alpha_i$ as elements:
\begin{define}
For $k\geq1$, let $F_k^{(i,j)}$ denote the determinant of the $k\times k$ bimoment matrix which starts with $I_{i,j}$ at the upper left corner:
%\begin{align}
%F_k^{(i,j)}=\det\left(I_{i+l,j+m}\right)_{l,m=0}^{k-1}=F_k^{(j,i)}.
%\end{align}
\begin{align}
F_k^{(i,j)}=\left|
\begin{array}{cccc}
I_{i,j}&I_{i,j+1}&\cdots & I_{i,j+k-1}\\
I_{i+1,j}&I_{i+1,j+1}&\cdots&I_{i+1,j+k-1}\\
\vdots&\vdots&\ddots&\vdots\\
I_{i+k-1,j}&I_{i+k-1,j+1}&\cdots&I_{i+k-1,j+k-1}
\end{array}
\right|=F_k^{(j,i)}.
\end{align}
Let $F_0^{(i,j)}=1$ and $F_k^{(i,j)}=0$ for $k<0$.

For $k\geq2$, let $G_k^{(i,j)}$ denote the $k\times k$ determinant
%\begin{align}
%G_k^{(i,j)}=\det
%\left(
%\boldsymbol{ \alpha}_{k}^{(i-1)}\vline\left(I_{i+l,j+m}\right)_{\substack{0\leq l\leq k-1\\0\leq m\leq k-2}}
%\right),\qquad \boldsymbol{ \alpha}_{k}^{(i)}=(\alpha_{i},\alpha_{i+1},\cdots,\alpha_{i+k-1})^ \mathrm{ T }. 
%\end{align}
\begin{align}
G_k^{(i,j)}=\left|
\begin{array}{ccccc}
\alpha_{i-1}&I_{i,j}&I_{i,j+1}&\cdots & I_{i,j+k-2}\\
\alpha_i&I_{i+1,j}&I_{i+1,j+1}&\cdots&I_{i+1,j+k-2}\\
\vdots&\vdots&\vdots&\ddots&\vdots\\
\alpha_{i+k-2}&I_{i+k-1,j}&I_{i+k-1,j+1}&\cdots&I_{i+k-1,j+k-2}
\end{array}
\right|.
\end{align}
Let $G_1^{(i,j)}=\alpha_{i-1}$ and $G_k^{(i,j)}=0$ for $k<1$.

For $k\geq2$, let $E_k^{(i,j)}$ denote the $k\times k$ determinant
%\begin{align}
%E_k^{(i,j)}=\det\left(
%\begin{array}{ccc}
%0&\vline&{\boldsymbol{ \alpha}_{k-1}^{(j-1)}}^ \mathrm{ T }\\
%\hline
%\boldsymbol{ \alpha}_{k-1}^{(i-1)}&\vline&\left(I_{i+l,j+m}\right)_{l,m=0}^{k-1}
%\end{array}
%\right)=E_k^{(j,i)}.
%\end{align}
\begin{align}
E_k^{(i,j)}=\left|
\begin{array}{ccccc}
0&\alpha_{j-1}&\alpha_j&\cdots&\alpha_{j+k-3}\\
\alpha_{i-1}&I_{i,j}&I_{i,j+1}&\cdots & I_{i,j+k-2}\\
\alpha_i&I_{i+1,j}&I_{i+1,j+1}&\cdots&I_{i+1,j+k-2}\\
\vdots&\vdots&\vdots&\ddots&\vdots\\
\alpha_{i+k-3}&I_{i+k-2,j}&I_{i+k-2,j+1}&\cdots&I_{i+k-2,j+k-2}
\end{array}
\right|=E_k^{(j,i)}.
\end{align}
Let $E_k^{(i,j)}=0$ for $k<2$.
\end{define}
\begin{remark}\label{rem:multiple}
The determinants $F_k^{(i,j)}$ own multiple integral representations 
$$
F_k^{(i,j)}=\displaystyle\iint_{\substack{0<x_1<\cdots<x_n\\0<y_1<\cdots<y_n}}\frac{\prod_{1\leq p<q\leq n}(x_p-x_q)^2(y_p-y_q)^2}{\prod_{p,q=1}^n{(x_p+y_q)}}\prod_{p=1}^n\prod_{q=1}^n(x_p)^i(y_q)^jd\mu(x_p)d\mu(y_q),
$$ which are closely related to
the partition function of the Cauchy two-matrix model \cite{bertola2009cauchy,bertola2014cauchy}. 
\end{remark}

\begin{remark}
Note that, in contrast to the determinants in Section \ref{subsec:det}, there is offset of the index of the
moments ($\alpha_{i-1}$ etc.) here.
\end{remark}

As is shown in \cite{hone2009explicit,lundmark2005degasperis}, for some specific $(i,j)$, $F_k^{(i,j)},G_k^{(i,j)}$ could be explicitly evaluated in terms of $U_k, V_k, W_k$ in \eqref{exp:uvw}. These relations are useful for us so that the expressions in Theorem \ref{th:DP_solution} can be rewritten in terms of $F_k^{(i,j)},G_k^{(i,j)}$. For our convenience, the objects we will deal with are these bimoment determinants instead of  $U_k, V_k, W_k$. %Of course, some of them have been observed in \cite{hone2009explicit,lundmark2005degasperis}, while we shall give alternative derivations for these results.
\begin{lemma} \label{lem:FGUVW}
For $1\leq k\leq n$, there hold 
\begin{align*}
&F_k^{(1,0)}=F_k^{(0,1)}=\frac{(U_k)^2}{2^k}>0,&& F_k^{(1,1)}=\frac{W_k}{2^k}>0,&& F_k^{(2,1)}=F_k^{(1,2)}=\frac{(V_k)^2}{2^k}>0,\\
&G_k^{(1,0)}=\frac{U_kU_{k-1}}{2^{k-1}}>0,&& G_k^{(1,1)}=\frac{U_kV_{k-1}}{2^{k-1}}>0,&& G_k^{(2,1)}=\frac{V_kV_{k-1}}{2^{k-1}}>0,\\
&G_k^{(2,0)}=\frac{U_{k-1}V_{k}-U_{k+1}V_{k-2}}{2^{k-1}}.
\end{align*}
\end{lemma}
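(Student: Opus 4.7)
The plan is to reduce each identity to an algebraic computation by first establishing a uniform Andreief (Cauchy-Binet) expansion of the bimoment determinants. The bimoment matrix factorizes as $(I_{i+r-1,j+s-1})_{r,s=1}^{k}=A^{(i)}\,K\,B^{(j)}$, with $A^{(i)}_{rp}=a_p\zeta_p^{i+r-1}$, $B^{(j)}_{qs}=a_q\zeta_q^{j+s-1}$, and Cauchy kernel $K_{pq}=(\zeta_p+\zeta_q)^{-1}$. Two nested applications of Cauchy-Binet, together with the Cauchy determinant identity $\det(K_{pq})_{p\in P,q\in Q}=\Delta_P\Delta_Q/\prod_{p\in P,q\in Q}(\zeta_p+\zeta_q)$ and the Vandermonde evaluation of the minors of $A^{(i)}$ and $B^{(j)}$, then yield the closed form
\begin{equation*}
F_k^{(i,j)}=\sum_{|P|=|Q|=k}\frac{\zeta_P^{\,i}\,\zeta_Q^{\,j}\,\Delta_P^{2}\,\Delta_Q^{2}\,a_P\,a_Q}{\prod_{p\in P,\,q\in Q}(\zeta_p+\zeta_q)}.
\end{equation*}
An analogous calculation -- expand $G_k^{(i,j)}$ first along its moment-valued first column, then apply Cauchy-Binet to the remaining $(k-1)\times(k-1)$ bimoment block -- produces a companion formula for $G_k^{(i,j)}$ summed over $|P|=k-1,|Q|=k$, with an additional single-moment $\alpha$-factor carried from the first column.

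For the five ``product'' identifications ($F_k^{(1,0)}$, $F_k^{(2,1)}$, $G_k^{(1,0)}$, $G_k^{(1,1)}$, $G_k^{(2,1)}$), the right-hand sides themselves unfold into double subset sums by the defining formulas for $U_\bullet$ and $V_\bullet$. Matching with the subset sums above is then accomplished by decomposing $\prod_{p\in P,q\in Q}(\zeta_p+\zeta_q)$ according to whether $(p,q)$ lies in $(P\cap Q)\times(P\cap Q)$ (contributing $2^{|P\cap Q|}\,\zeta_{P\cap Q}\,\Gamma_{P\cap Q}^{\,2}$), in $(P\cap Q)\times(P\triangle Q)$, or in $(P\setminus Q)\times(Q\setminus P)$, and by the parallel decomposition of $\Gamma_P\Gamma_Q$; the residual cancellation is an index bookkeeping on the symmetric difference $P\triangle Q$. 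These calculations are essentially the determinant evaluations performed in the inverse-spectral analysis of \cite{lundmark2005degasperis,hone2009explicit}, rewritten in the uniform $F_k^{(i,j)},G_k^{(i,j)}$ notation.

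The two ``difference-of-products'' identities $F_k^{(1,1)}=W_k/2^k=(V_kU_k-V_{k-1}U_{k+1})/2^k$ and $G_k^{(2,0)}=(U_{k-1}V_k-U_{k+1}V_{k-2})/2^{k-1}$ will be produced by a Desnanot--Jacobi (Lewis Carroll) identity. I will apply it to a $(k+1)\times(k+1)$ bimoment block suitably bordered by single moments $\alpha_\bullet$, with the border chosen so that the unidentified determinant $F_k^{(0,0)}$ never appears as a corner minor; then, after tracking index shifts, each of the four $k\times k$ corner minors coincides with one of the $F_k^{(\cdot,\cdot)}$ or $G_k^{(\cdot,\cdot)}$ already identified in the elementary step, and substituting those identifications into the Desnanot--Jacobi relation and solving for $F_k^{(1,1)}$ (resp.\ $G_k^{(2,0)}$) produces the desired $UV$-combination with the correct normalization. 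Positivity of the five product identifications is immediate because $U_{k'}$ and $V_{k'}$ are sums of strictly positive terms, and positivity of $W_k=2^kF_k^{(1,1)}$ is already supplied by \cite[Lemma~2.20]{lundmark2005degasperis}.

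The step I expect to be most delicate is the Desnanot--Jacobi reduction: one must pick the border so that $F_k^{(0,0)}$ never surfaces, and the index shifts must line up precisely so that the four corner minors assemble into $V_kU_k-V_{k-1}U_{k+1}$ (resp.\ $U_{k-1}V_k-U_{k+1}V_{k-2}$) without extraneous $\zeta$-weights. Once that bordered matrix is in hand, the remainder of the lemma follows by direct substitution and elementary algebra.
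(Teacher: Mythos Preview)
Your approach is correct and matches the paper's, which is almost entirely a citation proof: six of the seven formulae are attributed to \cite{lundmark2005degasperis,hone2009explicit} (your Cauchy--Binet expansion of $F_k^{(i,j)}$ and subset-sum matching is precisely what those references carry out), and only $G_k^{(2,0)}$ is derived here, via the Jacobi identity \eqref{id:bi7} at $i=1,j=0$ --- which is exactly a Desnanot--Jacobi relation on the bordered determinant $G_{k+1}^{(1,0)}$, in line with your plan.

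Regarding your flagged ``delicate step'': you need not hunt for a bespoke border to avoid $F_k^{(0,0)}$. The relation \eqref{id:bi2} of Lemma~\ref{lem:bi} (Desnanot--Jacobi on $G_{k+1}^{(1,1)}$) at $i=j=1$ reads
\[
G_{k+1}^{(1,1)}F_{k-1}^{(2,1)}=G_k^{(1,1)}F_k^{(2,1)}-G_k^{(2,1)}F_k^{(1,1)},
\]
in which every term except $F_k^{(1,1)}$ is already a $U,V$-monomial from your product identifications and $F_k^{(0,0)}$ never appears; solving gives $F_k^{(1,1)}=(U_kV_k-U_{k+1}V_{k-1})/2^k=W_k/2^k$. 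The paper omits this only because $F_k^{(1,1)}=W_k/2^k$ is itself among the cited formulae.
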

\begin{proof}
All the formulae can be found in \cite{hone2009explicit,lundmark2005degasperis} except the last one for $G_k^{(2,0)}$. It immediately follows from the identity \eqref{id:bi7} with $i=1,j=0$ presented in Lemma \ref{lem:bi} and previously known formulae.  Furthermore, the positivity follows from the definition of $U_k, V_k$ or the integral representation of $F_k^{(i,j)}$ in Remark \ref{rem:multiple}.
\end{proof}
 From this lemma, the following corollary immediately follows.
\begin{coro} \label{coro:FG}
For any $k\in \mathbb{Z}$, there hold 
\begin{align}
(G_k^{(1,0)})^2=2F_k^{(1,0)}F_{k-1}^{(1,0)},\qquad (G_k^{(1,1)})^2=2F_k^{(1,0)}F_{k-1}^{(2,1)}. \label{rel:FG}
\end{align}
\end{coro}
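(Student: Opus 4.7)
The plan is to prove both identities by direct substitution of the closed-form expressions furnished by Lemma \ref{lem:FGUVW}, which reduces the corollary to elementary algebra plus a check of the boundary values of $k$.

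For $1\le k\le n$, Lemma \ref{lem:FGUVW} provides $F_k^{(1,0)}=U_k^2/2^k$, $F_{k-1}^{(1,0)}=U_{k-1}^2/2^{k-1}$, $F_{k-1}^{(2,1)}=V_{k-1}^2/2^{k-1}$, $G_k^{(1,0)}=U_kU_{k-1}/2^{k-1}$ and $G_k^{(1,1)}=U_kV_{k-1}/2^{k-1}$. Squaring $G_k^{(1,0)}$ gives $U_k^2U_{k-1}^2/2^{2k-2}$, and a direct multiplication shows $2F_k^{(1,0)}F_{k-1}^{(1,0)}=2\cdot(U_k^2/2^k)\cdot(U_{k-1}^2/2^{k-1})=U_k^2U_{k-1}^2/2^{2k-2}$, confirming the first identity. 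The second identity is obtained by the same calculation after replacing $U_{k-1}$ by $V_{k-1}$ and $F_{k-1}^{(1,0)}$ by $F_{k-1}^{(2,1)}$.

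It remains to verify that both sides vanish for $k\le 0$ and $k>n$. For $k\le 0$ the conventions $G_k^{(\cdot,\cdot)}=0$ for $k<1$ and $F_k^{(\cdot,\cdot)}=0$ for $k<0$ handle the matter immediately, the only case calling for a line of thought being $k=0$, where the right-hand sides read $2F_0^{(1,0)}F_{-1}^{(\cdot,\cdot)}=2\cdot 1\cdot 0=0$. For $k>n$, the moment matrices defining $F_k^{(i,j)}$ and $G_k^{(i,j)}$ both factor through the $n$-dimensional column space of the Vandermonde-type matrix $(\zeta_p^l)_{p,l}$ attached to the discrete measure $\mu$; indeed the single moment column $(\alpha_{l})_l$ equals $V^{\mathsf T}\vec a$, while each bimoment column has the form $V^{\mathsf T}(CV)_{\cdot,m}$ with $C_{pq}=a_pa_q/(\zeta_p+\zeta_q)$. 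Hence any $k\times k$ minor with $k>n$ is zero, and the two sides of each identity vanish simultaneously.

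The proof is essentially bookkeeping; the substantive content lies entirely in Lemma \ref{lem:FGUVW}, so there is no genuine obstacle beyond the boundary accounting just described.
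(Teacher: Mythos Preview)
Your argument is correct and follows the same route as the paper, which simply records that the corollary ``immediately follows'' from Lemma~\ref{lem:FGUVW}; the only difference is that you carry out the boundary bookkeeping for $k\le 0$ and $k>n$ explicitly, the latter by a rank argument that in the paper is stated separately as Lemma~\ref{lem:FGE0}.
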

\begin{remark} By using Lemma \ref{lem:FGUVW} and Corollary \ref{coro:FG}, the explicit formulae of the peakon solution \eqref{sol:DPxt_form}  of the DP equation could be rewritten in terms of $F_k^{(i,j)},G_k^{(i,j)}$, which will be presented in the next subsection (see \eqref{sol:DP_FG}) for convenience.
\end{remark}

An interesting observation is that the bimoment determinants $F_k^{(i,j)},G_k^{(i,j)},E_k^{(i,j)}$ of high order are null. This is not surprising because the measure $\mu$ has finite support. Particularly, $F_k^{(i,j)}=0$ for $k>n$ follows immediately from the integral representation in Remark \ref{rem:multiple}. Such phenomenon has appeared in the Hankel determinants with finite measures for CH peakon problems \cite{beals2000multipeakons,chang2014generalized,chang2016multipeakons}, and actually the corresponding result some specific $F_k^{(i,j)},G_k^{(i,j)}$ has been obtained in \cite{hone2009explicit,lundmark2005degasperis}.  In the following, we present a generic result with a proof based on matrix factorizations, which are motivated by those for Hankel determinants \cite{chang2014generalized,chang2016multipeakons} but much more complicated.
\begin{lemma} \label{lem:FGE0}
For $k>n$, there hold
\begin{equation}
F_k^{(i,j)}=G_k^{(i,j)}=E_{k+1}^{(i,j)}=0.
\end{equation}
\end{lemma}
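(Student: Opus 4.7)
The plan is to exhibit, for each of the three determinants, a matrix factorization $M=PQ$ whose inner dimension drops below the size of $M$ once $k>n$; Cauchy--Binet (equivalently, a rank argument) then forces the determinant to vanish, since the measure $\mu$ is supported on only $n$ atoms.

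For $F_k^{(i,j)}$ I would substitute the explicit form $I_{a,b}=\sum_{p,q}\zeta_p^{a}\zeta_q^{b}a_pa_q/(\zeta_p+\zeta_q)$ directly into the bimoment matrix to obtain $F_k^{(i,j)}=X_i^{\top}BX_j$, where $X_i,X_j$ are $n\times k$ Vandermonde-type matrices with $(X_i)_{p,r}=\zeta_p^{i+r-1}$ and $B\in\mathbb{R}^{n\times n}$ is the Cauchy-like matrix with entries $a_pa_q/(\zeta_p+\zeta_q)$. Since the inner dimension is $n<k$, the determinant vanishes. For $G_k^{(i,j)}$ the first column of single moments is brought into the same column span as the remaining bimoment columns via the trivial identity $\zeta_p^{i+r-2}=\zeta_p^{-1}\cdot\zeta_p^{i+r-1}$, yielding $G_k^{(i,j)}=XY$ with $X\in\mathbb{R}^{k\times n}$ Vandermonde and $Y\in\mathbb{R}^{n\times k}$, and the same rank argument closes the case.

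The delicate case is $E_{k+1}^{(i,j)}$, because the border row and column of single moments contribute one additional degree of freedom that an $n$-dimensional factor cannot absorb. My plan is to use the zero at the $(1,1)$ corner to write $E_{k+1}^{(i,j)}=AB$ with $A$ of size $(k+1)\times(n+1)$ and $B$ of size $(n+1)\times(k+1)$: the first row of $A$ is $(1,0,\dots,0)$, while for $r\ge 2$ its $r$-th row is $\bigl(0,\,a_1\zeta_1^{i+r-2},\dots,a_n\zeta_n^{i+r-2}\bigr)$; the first row of $B$ is taken to be the first row of $E_{k+1}^{(i,j)}$ itself, and for $p=1,\dots,n$ its $(p+1)$-th row is $\bigl(\zeta_p^{-1},\,\sum_q \tfrac{a_q\zeta_q^{j}}{\zeta_p+\zeta_q},\dots,\sum_q\tfrac{a_q\zeta_q^{j+k-1}}{\zeta_p+\zeta_q}\bigr)$. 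A short entry-wise check shows $AB=E_{k+1}^{(i,j)}$, and Cauchy--Binet forces the determinant to vanish whenever $k+1>n+1$, i.e.\ $k>n$. The main obstacle is spotting this factorization; the leading zero in the $E$-matrix is precisely what permits the auxiliary $(n+1)$-st slot needed to host the border moments.
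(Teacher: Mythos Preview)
Your proof is correct and follows the same strategy as the paper: exhibit matrix factorizations whose inner dimension is too small, forcing the determinants to vanish by a rank (Cauchy--Binet) argument. Your handling of $G_k^{(i,j)}$ is in fact a bit cleaner than the paper's---you factor directly as $XY$ with inner dimension $n$, whereas the paper uses a three-factor product of inner dimension $n+1$ and then treats the boundary case $k=n+1$ separately by observing that the middle $(n+1)\times(n+1)$ factor has a zero row and hence vanishing determinant.
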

\begin{proof}
By employing the expressions of $I_{i,j}$ and $\alpha_i$ in terms of $\zeta_p,a_p$, it is not hard to see
\begin{align*}
&\left(
\begin{array}{cccc}
I_{i,j}&I_{i,j+1}&\cdots & I_{i,j+k-1}\\
I_{i+1,j}&I_{i+1,j+1}&\cdots&I_{i+1,j+k-1}\\
\vdots&\vdots&\ddots&\vdots\\
I_{i+k-1,j}&I_{i+k-1,j+1}&\cdots&I_{i+k-1,j+k-1}
\end{array}
\right)
=\left(
\begin{array}{cccc}
\zeta_1^i&\zeta_2^i&\cdots&\zeta_n^i\\
\zeta_1^{i+1}&\zeta_2^{i+1}&\cdots&\zeta_n^{i+1}\\
\vdots&\vdots&\ddots&\vdots\\
\zeta_1^{i+k-1}&\zeta_2^{i+k-1}&\cdots&\zeta_n^{i+k-1}
\end{array}
\right)_{k\times n}\\
&\qquad\qquad\qquad\qquad
\cdot
\left(
\begin{array}{cccc}
\frac{a_1a_1}{\zeta_1+\zeta_1}&\frac{a_1a_2}{\zeta_1+\zeta_2}&\cdots&\frac{a_1a_n}{\zeta_1+\zeta_n}\\
\frac{a_2a_1}{\zeta_2+\zeta_1}&\frac{a_2a_2}{\zeta_1+\zeta_2}&\cdots&\frac{a_2a_n}{\zeta_2+\zeta_n}\\
\vdots&\vdots&\ddots&\vdots\\
\frac{a_na_1}{\zeta_n+\zeta_1}&\frac{a_na_2}{\zeta_n+\zeta_2}&\cdots&\frac{a_na_n}{\zeta_n+\zeta_n}
\end{array}
\right)_{n\times n}
\cdot
\left(
\begin{array}{cccc}
\zeta_1^j&\zeta_1^{j+1}&\cdots&\zeta_1^{j+k-1}\\
\zeta_2^j&\zeta_2^{j+1}&\cdots&\zeta_2^{j+k-1}\\
\vdots&\vdots&\ddots&\vdots\\
\zeta_n^j&\zeta_n^{j+1}&\cdots&\zeta_n^{j+k-1}
\end{array}
\right)_{n\times k},
\end{align*}
which imply that the rank of the matrix on the left-hand side is not more than $n$. Thus we have $F_k^{(i,j)}=0$ for $k>n$.

On the other hand, since there hold the following matrix factorizations:
\begin{align*}
&\left(
\begin{array}{ccccc}
\alpha_{i-1}&I_{i,j}&I_{i,j+1}&\cdots & I_{i,j+k-2}\\
\alpha_i&I_{i+1,j}&I_{i+1,j+1}&\cdots&I_{i+1,j+k-2}\\
\vdots&\vdots&\vdots&\ddots&\vdots\\
\alpha_{i+k-2}&I_{i+k-1,j}&I_{i+k-1,j+1}&\cdots&I_{i+k-1,j+k-2}
\end{array}
\right)=\left(
\begin{array}{cccc}
0&\zeta_1^i&\cdots&\zeta_n^i\\
0&\zeta_1^{i+1}&\cdots&\zeta_n^{i+1}\\
\vdots&\vdots&\ddots&\vdots\\
0&\zeta_1^{i+k-1}&\cdots&\zeta_n^{i+k-1}
\end{array}
\right)_{k\times (n+1)} \\
&\qquad\qquad\qquad\qquad
\cdot
\left(
\begin{array}{cccc}
0&0&\cdots&0\\
\frac{a_1}{\zeta_1}&\frac{a_1a_1}{\zeta_1+\zeta_1}&\cdots&\frac{a_1a_n}{\zeta_1+\zeta_n}\\
\vdots&\vdots&\ddots&\vdots\\
\frac{a_n}{\zeta_n}&\frac{a_na_1}{\zeta_n+\zeta_1}&\cdots&\frac{a_na_n}{\zeta_n+\zeta_n}
\end{array}
\right)_{(n+1)\times (n+1)}
\cdot
\left(
\begin{array}{cccc}
1&0&\cdots&0\\
0&\zeta_1^j&\cdots&\zeta_1^{j+k-2}\\
\vdots&\vdots&\ddots&\vdots\\
0&\zeta_n^j&\cdots&\zeta_n^{j+k-2}
\end{array}
\right)_{(n+1)\times k},
\end{align*}
\begin{align*}
&\left(
\begin{array}{ccccc}
0&\alpha_{j-1}&\alpha_j&\cdots&\alpha_{j+k-3}\\
\alpha_{i-1}&I_{i,j}&I_{i,j+1}&\cdots & I_{i,j+k-2}\\
\alpha_i&I_{i+1,j}&I_{i+1,j+1}&\cdots&I_{i+1,j+k-2}\\
\vdots&\vdots&\vdots&\ddots&\vdots\\
\alpha_{i+k-3}&I_{i+k-2,j}&I_{i+k-2,j+1}&\cdots&I_{i+k-2,j+k-2}
\end{array}
\right)=\left(
\begin{array}{cccc}
1&0&\cdots&0\\
0&\zeta_1^i&\cdots&\zeta_n^i\\
\vdots&\vdots&\ddots&\vdots\\
0&\zeta_1^{i+k-2}&\cdots&\zeta_n^{i+k-2}
\end{array}
\right)_{k\times (n+1)} \\
&\qquad\qquad\qquad
\cdot
\left(
\begin{array}{cccc}
0&\frac{a_1}{\zeta_1}&\cdots&\frac{a_n}{\zeta_n}\\
\frac{a_1}{\zeta_1}&\frac{a_1a_1}{\zeta_1+\zeta_1}&\cdots&\frac{a_1a_n}{\zeta_1+\zeta_n}\\
\vdots&\vdots&\ddots&\vdots\\
\frac{a_n}{\zeta_n}&\frac{a_na_1}{\zeta_n+\zeta_1}&\cdots&\frac{a_na_n}{\zeta_n+\zeta_n}
\end{array}
\right)_{(n+1)\times (n+1)}
\cdot
\left(
\begin{array}{cccc}
1&0&\cdots&0\\
0&\zeta_1^j&\cdots&\zeta_1^{j+k-2}\\
\vdots&\vdots&\ddots&\vdots\\
0&\zeta_n^j&\cdots&\zeta_n^{j+k-2}
\end{array}
\right)_{(n+1)\times k},
\end{align*}
it follows that the rank of the matrix on the left-hand side is not more than $n+1$, which gives $G_k^{(i,j)}=E_k^{(i,j)}=0$ for $k>n+1$. Furthermore, for $G_{n+1}^{(i,j)}=0$, since all matrices are $(n + 1) \times (n + 1)$ in this case, we can just take determinants of each factor on the right-hand side and use that the middle determinant is obviously zero. Thus, the proof is completed.
 \end{proof}

There exist rich relations among $F_k^{(i,j)},G_k^{(i,j)},E_k^{(i,j)}$. In the following, we derive some bilinear identities, which play important roles in the subsequent content.  They are all the consequences of employing the well known Jacobi determinant  identity  \cite{aitken1959determinants}, which reads
\begin{align}
\mathcal{D} \mathcal{D}\left(\begin{array}{cc}
i_1 & i_2 \\
j_1 & j_2 \end{array}\right)=\mathcal{D}\left(\begin{array}{c}
i_1  \\
j_1 \end{array}\right)\mathcal{D}\left(\begin{array}{c}
i_2  \\
j_2 \end{array}\right)-\mathcal{D}\left(\begin{array}{c}
i_1  \\
j_2 \end{array}\right)\mathcal{D}\left(\begin{array}{c}
i_2  \\
j_1 \end{array}\right). \label{id:jacobi}
\end{align}
Here $\mathcal{D}$ is an indeterminate determinant. $\mathcal{D}\left(\begin{array}{cccc}
i_1&i_2 &\cdots& i_k\\
j_1&j_2 &\cdots& j_k
\end{array}\right)$ with $ i_1<i_2<\cdots<i_k,\ j_1<j_2<\cdots<j_k$ denotes the
determinant of the matrix obtained from $\mathcal{D}$ by removing the rows with indices
$i_1,i_2,\dots, i_k$ and the columns with indices $j_1,j_2,\dots, j_k$.
\begin{lemma} \label{lem:bi}
For any $i,j\in \mathbb{Z}$, $k\in  \mathbb{N}_+$, there hold 
\begin{align}
&F_{k+1}^{(i,j)}F_{k-1}^{(i+1,j+1)}=F_{k}^{(i,j)}F_{k}^{(i+1,j+1)}-F_{k}^{(i,j+1)}F_{k}^{(i+1,j)},\label{id:bi1}\\
&G_{k+1}^{(i,j)}F_{k-1}^{(i+1,j)}=G_{k}^{(i,j)}F_{k}^{(i+1,j)}-G_{k}^{(i+1,j)}F_{k}^{(i,j)},\label{id:bi2}\\
&G_{k}^{(i,j)}F_{k-1}^{(i,j+1)}=G_{k}^{(i,j+1)}F_{k-1}^{(i,j)}-G_{k-1}^{(i,j+1)}F_{k}^{(i,j)},\label{id:bi3}\\
&E_{k+1}^{(i,j-1)}F_{k-1}^{(i+1,j)}=F_{k}^{(i,j-1)}E_{k}^{(i+1,j)}-G_{k}^{(j-1,i+1)}G_{k}^{(i,j)},\label{id:bi5}\\
&G_{k}^{(i,j+1)}F_{k-1}^{(i+1,j)}=F_{k}^{(i,j)}G_{k-1}^{(i+1,j+1)}+F_{k-1}^{(i+1,j+1)}G_{k}^{(i,j)},\label{id:bi4}\\
&E_{k}^{(i,j+1)}F_{k-1}^{(i,j)}=G_{k}^{(j,i)}G_{k-1}^{(i,j+1)}+E_{k}^{(i,j)}F_{k-1}^{(i,j+1)}.\label{id:bi6}\\
&G_{k+1}^{(i,j)}G_{k-1}^{(i+1,j+1)}=G_{k}^{(i,j)}G_{k}^{(i+1,j+1)}-G_{k}^{(i,j+1)}G_{k}^{(i+1,j)}.\label{id:bi7}
%&E_{k}^{(i,j)}F_{k-1}^{(i,j-1)}=G_{k}^{(j-1,i)}G_{k-1}^{(i,j)}+F_{k-1}^{(i,j)}E_{k}^{(i,j-1)}.\label{id:bi7}
\end{align}
\end{lemma}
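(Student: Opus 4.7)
The plan is to derive all seven bilinear identities as consequences of the Jacobi (Desnanot--Jacobi/Sylvester) determinant identity \eqref{id:jacobi}. For each identity, I select a $(k+1)\times(k+1)$ master determinant $\mathcal{D}$---typically $F_{k+1}^{(i',j')}$, $G_{k+1}^{(i',j')}$, or $E_{k+1}^{(i',j')}$ for suitably shifted $(i',j')$---and strike two rows $i_1<i_2$ and two columns $j_1<j_2$. The double minor $\mathcal{D}\left(\begin{smallmatrix} i_1 & i_2\\ j_1 & j_2\end{smallmatrix}\right)$ plays the role of the second factor on the LHS, and the four single minors match the four determinants on the RHS. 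The Cauchy-kernel symmetry $I_{p,q}=I_{q,p}$, together with the induced transposition symmetries of $F$, $G$, and $E$, is used throughout to identify those minors which, after a transposition, become swapped-index determinants such as $G_k^{(j-1,i+1)}$ in \eqref{id:bi5} or $G_k^{(j,i)}$ in \eqref{id:bi6}.

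For the even-degree identities \eqref{id:bi1}, \eqref{id:bi2}, \eqref{id:bi5} and \eqref{id:bi7}---each side having total determinantal degree $2k$, matching the $2(n-1)$ balance produced by a direct Jacobi application on a master of size $n=k+1$---the verification is essentially mechanical once the strike pattern is identified. Identities \eqref{id:bi1} and \eqref{id:bi2} use the symmetric ``boundary'' strike $(1,k+1;1,k+1)$ on $F_{k+1}^{(i,j)}$ and $G_{k+1}^{(i,j)}$ respectively; \eqref{id:bi5} uses the ``upper-left corner'' strike $(1,2;1,2)$ on $E_{k+1}^{(i,j-1)}$, producing $F_{k-1}^{(i+1,j)}$ as the double minor and $F_k^{(i,j-1)}$, $E_k^{(i+1,j)}$, $G_k^{(i,j)}$, $G_k^{(j-1,i+1)}$ (the last obtained by transposing an $\alpha$-row-bordered minor) as the four single minors; and \eqref{id:bi7} requires the asymmetric strike $(1,k+1;2,k+1)$ on $G_{k+1}^{(i,j)}$, keeping the $\alpha$-column intact so that the double minor is $G_{k-1}^{(i+1,j+1)}$ rather than $F_{k-1}^{(i+1,j+1)}$.

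The principal obstacle is the remaining three identities \eqref{id:bi3}, \eqref{id:bi4} and \eqref{id:bi6}. Their two sides have total determinantal degree $2k-1$, which is odd, hence incompatible with the always-even balance produced by a single direct Jacobi application on any master of integer size. I expect to handle these by an auxiliary construction: apply Jacobi to a bordered $(k+2)\times(k+2)$ master one of whose corner entries is taken to be a moment product $\alpha_p\alpha_q$, and then invoke the Cauchy-kernel identity $\alpha_p\alpha_q = I_{p,q+1}+I_{p+1,q}$ (which is immediate from multiplying by $(x+y)/(x+y)$ inside the defining integral of $I_{p,q}$) to split the corresponding minors into standard $F$, $G$, $E$ forms; the ``$+$'' sign appearing in \eqref{id:bi4} and \eqref{id:bi6} is naturally produced by this splitting. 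Alternatively, \eqref{id:bi3} should follow by eliminating the common factor $F_{k-1}^{(i+1,j)}$ between \eqref{id:bi2} and a re-indexed copy of \eqref{id:bi4}. The core technical burden throughout is bookkeeping: correctly pinning down master, strike, index shifts, and transposition/splitting at each step.
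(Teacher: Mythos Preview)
Your handling of \eqref{id:bi1}, \eqref{id:bi2}, \eqref{id:bi5}, and \eqref{id:bi7} is correct and matches the paper exactly: Jacobi on $F_{k+1}^{(i,j)}$, $G_{k+1}^{(i,j)}$, $E_{k+1}^{(i,j-1)}$, $G_{k+1}^{(i,j)}$ with precisely the strike patterns you describe.

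For the three odd-degree identities \eqref{id:bi3}, \eqref{id:bi4}, \eqref{id:bi6}, however, there is a gap. You diagnose the parity obstruction correctly, but your proposed cures are not in working order. The elimination route (``derive \eqref{id:bi3} from \eqref{id:bi2} and \eqref{id:bi4}'') merely shifts the burden to \eqref{id:bi4}, which is itself one of the odd identities. The $(k+2)$-master with an $\alpha_p\alpha_q$ corner plus Cauchy splitting is only a hope: you give no concrete master or strike, and a size-$(k+2)$ Jacobi balances at total degree $2k+2$, three units above the target $2k-1$, so several nontrivial cancellations would be required and it is not clear they occur.

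The paper's device is much simpler and does not use the Cauchy-kernel relation at all: it borders the relevant $k\times k$ block with an \emph{artificial unit row} $(0,1,0,\ldots,0)$. For \eqref{id:bi3} and \eqref{id:bi4} the $(k+1)\times(k+1)$ master is
\[
\mathcal{D}_3=\left|
\begin{array}{ccccc}
0&1&0&\cdots&0\\
\alpha_{i-1}&I_{i,j}&I_{i,j+1}&\cdots & I_{i,j+k-1}\\
\vdots&\vdots&\vdots&\ddots&\vdots\\
\alpha_{i+k-2}&I_{i+k-1,j}&I_{i+k-1,j+1}&\cdots&I_{i+k-1,j+k-1}
\end{array}
\right|=-G_k^{(i,j+1)},
\]
the last equality by expansion along the top row. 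Jacobi with strike $(1,k+1;1,k+1)$ then yields \eqref{id:bi3}, and with strike $(1,2;1,k+1)$ yields \eqref{id:bi4}; the ``$+$'' on the RHS of \eqref{id:bi4} appears because both $\mathcal{D}_3$ and the minor $\mathcal{D}_3\binom{2}{k+1}$ acquire an extra sign from the unit-row expansion. For \eqref{id:bi6} one places the same unit row on top of an $E$-type block. The mechanism is that expanding along the artificial row lowers by one the effective size of exactly those minors that contain it, which is precisely what repairs the degree parity.
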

\begin{proof}
We shall give the detailed proof for $k>1$ since the case for $k=1$ is trivial.
Take 
\begin{align*}
&\mathcal{D}_1=F_{k+1}^{(i,j)},\qquad \mathcal{D}_2=G_{k+1}^{(i,j)},\\
&\mathcal{D}_3=\left|
\begin{array}{ccccc}
0&1&0&\cdots&0\\
\alpha_{i-1}&I_{i,j}&I_{i,j+1}&\cdots & I_{i,j+k-1}\\
\alpha_i&I_{i+1,j}&I_{i+1,j+1}&\cdots&I_{i+1,j+k-1}\\
\vdots&\vdots&\vdots&\ddots&\vdots\\
\alpha_{i+k-2}&I_{i+k-1,j}&I_{i+k-1,j+1}&\cdots&I_{i+k-1,j+k-1}
\end{array}
\right|,\\
&i_1=j_1=1,\qquad i_2=j_2=k+1.
\end{align*}
Applying the Jacobi identity \eqref{id:jacobi} to $\mathcal{D}_1$, $\mathcal{D}_2$ and $\mathcal{D}_3$, respectively, one obtains \eqref{id:bi1}--\eqref{id:bi3}.

The Jacobi identity \eqref{id:jacobi} with the setting 
\begin{align*}
\mathcal{D}_4=E_{k+1}^{(i,j-1)},\qquad i_1=j_1=1,\qquad i_2=j_2=2
\end{align*}
can yield \eqref{id:bi5}.

If one considers 
\begin{align*}
&\mathcal{D}_5=\mathcal{D}_3,\quad i_1=j_1=1,\quad i_2=2, \quad j_2=k+1,\\
& \mathcal{D}_6= \left|
\begin{array}{ccccc}
0&1&0&\cdots&0\\
0&\alpha_{j-1}&\alpha_j&\cdots&\alpha_{j+k-2}\\
\alpha_{i-1}&I_{i,j}&I_{i,j+1}&\cdots & I_{i,j+k-1}\\
\alpha_i&I_{i+1,j}&I_{i+1,j+1}&\cdots&I_{i+1,j+k-1}\\
\vdots&\vdots&\vdots&\ddots&\vdots\\
\alpha_{i+k-3}&I_{i+k-2,j}&I_{i+k-2,j+1}&\cdots&I_{i+k-2,j+k-1}
\end{array}
\right|,
\end{align*}
then  \eqref{id:bi4}-\eqref{id:bi6} can be derived by use of the Jacobi identity \eqref{id:jacobi}. 

The last formula \eqref{id:bi7} is a consequence of applying the Jacobi identity \eqref{id:jacobi}
with 
$$ \mathcal{D}_7=\mathcal{D}_2,\quad i_1=1,\quad j_1=2, \quad i_2=j_2=k+1.$$

\end{proof}

By using these bilinear identities, one is led to the following nonlinear relations.
\begin{coro} \label{coro:idsum}
For any $i,j\in \mathbb{Z}$ and integer $k$ satisfying $1\leq k\leq n$,  there hold
\begin{align}
&\sum_{l=k}^n\frac{G_{l}^{(i,j)}F_{l}^{(i-1,j)}}{F_{l}^{(i,j)}F_{l-1}^{(i,j)}}=\frac{G_{k}^{(i-1,j)}}{F_{k-1}^{(i,j)}},\\
&\sum_{l=1}^{k}\frac{G_{l}^{(i,j)}F_{l-1}^{(i,j+1)}}{F_{l}^{(i,j)}F_{l-1}^{(i,j)}}=\frac{G_{k}^{(i,j+1)}}{F_{k}^{(i,j)}}.
\end{align}
\end{coro}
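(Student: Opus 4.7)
The plan is to recognize both sums as telescoping and exploit the bilinear identities in Lemma \ref{lem:bi}. The key observation is that identities \eqref{id:bi2} and \eqref{id:bi3} each have the right shape: two products of $G \cdot F$ at adjacent indices equal to a third one, which after dividing by a suitable product $F_{l-1}^{(i,j)}F_{l}^{(i,j)}$ becomes a difference of consecutive terms of the form $G_{\ast}/F_{\ast}$.

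For the first identity, I would start from \eqref{id:bi2} with $i$ replaced by $i-1$ and $k$ replaced by $l$, namely
\begin{equation*}
G_{l+1}^{(i-1,j)}F_{l-1}^{(i,j)}=G_{l}^{(i-1,j)}F_{l}^{(i,j)}-G_{l}^{(i,j)}F_{l}^{(i-1,j)}.
\end{equation*}
Dividing both sides by $F_{l-1}^{(i,j)}F_{l}^{(i,j)}$ and rearranging yields
\begin{equation*}
\frac{G_{l}^{(i,j)}F_{l}^{(i-1,j)}}{F_{l}^{(i,j)}F_{l-1}^{(i,j)}}=\frac{G_{l}^{(i-1,j)}}{F_{l-1}^{(i,j)}}-\frac{G_{l+1}^{(i-1,j)}}{F_{l}^{(i,j)}}.
\end{equation*}
Summing this from $l=k$ to $l=n$ telescopes to $\frac{G_{k}^{(i-1,j)}}{F_{k-1}^{(i,j)}}-\frac{G_{n+1}^{(i-1,j)}}{F_{n}^{(i,j)}}$, and the boundary term at $l=n$ vanishes because $G_{n+1}^{(i-1,j)}=0$ by Lemma \ref{lem:FGE0}.

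For the second identity, I would similarly start from \eqref{id:bi3} with $k$ replaced by $l$, that is
\begin{equation*}
G_{l}^{(i,j)}F_{l-1}^{(i,j+1)}=G_{l}^{(i,j+1)}F_{l-1}^{(i,j)}-G_{l-1}^{(i,j+1)}F_{l}^{(i,j)}.
\end{equation*}
Dividing by $F_{l-1}^{(i,j)}F_{l}^{(i,j)}$ gives
\begin{equation*}
\frac{G_{l}^{(i,j)}F_{l-1}^{(i,j+1)}}{F_{l}^{(i,j)}F_{l-1}^{(i,j)}}=\frac{G_{l}^{(i,j+1)}}{F_{l}^{(i,j)}}-\frac{G_{l-1}^{(i,j+1)}}{F_{l-1}^{(i,j)}},
\end{equation*}
and summing from $l=1$ to $l=k$ telescopes to $\frac{G_{k}^{(i,j+1)}}{F_{k}^{(i,j)}}-\frac{G_{0}^{(i,j+1)}}{F_{0}^{(i,j)}}$; the boundary term at $l=1$ vanishes because $G_0^{(i,j+1)}=0$ by the convention on $G$ (while $F_0^{(i,j)}=1$).

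There is no serious obstacle here: the whole content is packaged in Lemma \ref{lem:bi} and the vanishing boundary data. The only point that deserves a line of care is verifying that the denominators $F_{l-1}^{(i,j)}F_{l}^{(i,j)}$ do not vanish in the range of summation, which in the applications of interest is guaranteed by Lemma \ref{lem:FGUVW} (positivity of the relevant $F_k^{(i,j)}$ for $1\le k\le n$); formally, the identities hold as rational identities in $\{\zeta_p,a_p\}$ and can be stated wherever the denominators make sense.
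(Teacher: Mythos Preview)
Your proof is correct and follows exactly the paper's approach: rewrite each summand via \eqref{id:bi2} (with $i\mapsto i-1$) and \eqref{id:bi3} as a telescoping difference, then use $G_{n+1}^{(i-1,j)}=0$ and $G_0^{(i,j+1)}=0$ to kill the boundary terms. Your remark about the denominators is a sensible extra caution not spelled out in the paper.
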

\begin{proof}
By using the bilinear identity \eqref{id:bi2} in Lemma \ref{lem:bi}, we have 
\[
\sum_{l=k}^n\frac{G_{l}^{(i,j)}F_{l}^{(i-1,j)}}{F_{l}^{(i,j)}F_{l-1}^{(i,j)}}=\sum_{l=k}^n\left(\frac{G_{l}^{(i-1,j)}}{F_{l-1}^{(i,j)}}-\frac{G_{l+1}^{(i-1,j)}}{F_{l}^{(i,j)}}\right)=\frac{G_{k}^{(i-1,j)}}{F_{k-1}^{(i,j)}},
\]
where the facts $F_{n+1}^{(i,j)}=0$ are used.

Based on the identity \eqref{id:bi3} in Lemma \ref{lem:bi} and the facts $G_{0}^{(i,j)}=0$, we also have
\[
\sum_{l=1}^{k}\frac{G_{l}^{(i,j)}F_{l-1}^{(i,j+1)}}{F_{l}^{(i,j)}F_{l-1}^{(i,j)}}=\sum_{l=1}^k\left(\frac{G_{l}^{(i,j+1)}}{F_{l}^{(i,j)}}-\frac{G_{l-1}^{(i,j+1)}}{F_{l-1}^{(i,j)}}\right)=\frac{G_{k}^{(i,j+1)}}{F_{k}^{(i,j)}}.
\]
Thus the proof is completed.
\end{proof}
\subsection{A direct confirmation for Th. \ref{th:DP_solution}} In this subsection, we will give a direct confirmation for Th. \ref{th:DP_solution} by using determinant technique based on the results in the above subsection. For convenience, let's first give the following lemma, which expresses the evolution of the determinant $F_k^{(i,j)}$ in terms of $E_{k}^{(i,j)}$.
\begin{lemma} \label{lem:der_F}
If
\begin{align*}
\dot \zeta_p=0,\qquad  \dot a_p=\frac{a_p}{\zeta_p},
\end{align*}
 then we have
\begin{align}
\dot F_k^{(i,j)}=-E_{k+1}^{(i,j)}.
\end{align}
\end{lemma}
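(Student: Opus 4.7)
The plan is to exploit the fact that the time evolution $\dot a_p = a_p/\zeta_p$ makes $\dot I_{i,j}$ into a pleasant rank-one quantity, which turns $\dot F_k^{(i,j)}$ into a bordered-determinant expression.

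First I would compute $\dot I_{i,j}$ directly from the closed form $I_{i,j} = \sum_{p,q} \frac{\zeta_p^i \zeta_q^j a_p a_q}{\zeta_p+\zeta_q}$. Since $\dot\zeta_p=0$ and $\dot a_p = a_p/\zeta_p$, Leibniz gives
\[
\dot I_{i,j} = \sum_{p,q} \frac{\zeta_p^i \zeta_q^j a_p a_q}{\zeta_p+\zeta_q}\Bigl(\frac{1}{\zeta_p}+\frac{1}{\zeta_q}\Bigr) = \sum_{p,q}\zeta_p^{i-1}\zeta_q^{j-1}a_p a_q = \alpha_{i-1}\alpha_{j-1},
\]
using the telescoping $\tfrac{1}{\zeta_p}+\tfrac{1}{\zeta_q}=\tfrac{\zeta_p+\zeta_q}{\zeta_p\zeta_q}$ that cancels the Cauchy denominator. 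This is the calculational heart of the lemma.

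Second, let $M$ be the $k\times k$ matrix whose determinant is $F_k^{(i,j)}$, and set $u=(\alpha_{i-1},\alpha_i,\dots,\alpha_{i+k-2})^{T}$, $v=(\alpha_{j-1},\alpha_j,\dots,\alpha_{j+k-2})^{T}$. Step one shows $\dot M = uv^{T}$, so by Jacobi's formula for the derivative of a determinant,
\[
\dot F_k^{(i,j)} = \operatorname{tr}\bigl(\operatorname{adj}(M)\,uv^{T}\bigr) = v^{T}\operatorname{adj}(M)\,u.
\]

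Third, I would recognise $E_{k+1}^{(i,j)}$ as exactly the bordered matrix $\begin{pmatrix} 0 & v^{T}\\ u & M\end{pmatrix}$. Invoking the Schur complement formula (valid generically when $M$ is invertible, and extending to all cases by the polynomial identity principle),
\[
E_{k+1}^{(i,j)} = \det\begin{pmatrix} 0 & v^{T}\\ u & M\end{pmatrix} = \det(M)\bigl(0 - v^{T}M^{-1}u\bigr) = -v^{T}\operatorname{adj}(M)\,u,
\]
which combined with the previous display yields $\dot F_k^{(i,j)} = -E_{k+1}^{(i,j)}$.

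The only delicate point I expect is keeping track of signs in the bordered determinant; everything else is bookkeeping. A small sanity check (e.g.\ $k=1$, where $F_1^{(i,j)}=I_{i,j}$ and $E_2^{(i,j)}=-\alpha_{i-1}\alpha_{j-1}$) confirms the sign convention matches the statement. The boundary conventions $F_0^{(i,j)}=1$ and $E_1^{(i,j)}=0$ make the base case $k=0$ trivial, so no separate argument is needed there.
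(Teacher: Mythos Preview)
Your proposal is correct and follows essentially the same route as the paper: both first compute $\dot I_{i,j}=\alpha_{i-1}\alpha_{j-1}$, then identify the derivative of the determinant with the bordered determinant $-E_{k+1}^{(i,j)}$. The only difference is packaging: the paper expands the derivative column by column and matches the resulting double cofactor sum with the expansion of $E_{k+1}^{(i,j)}$ along its first row and first column, whereas you compress those two steps into Jacobi's formula $\dot F_k^{(i,j)}=v^{T}\operatorname{adj}(M)u$ and the Schur complement identity for $\det\begin{pmatrix}0&v^{T}\\ u&M\end{pmatrix}$.
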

\begin{proof}
It is not hard to see 
\[
\dot I_{i,j}=\sum_{p=1}^n\sum_{q=1}^n\frac{\zeta_p^i\zeta_q^j}{\zeta_p+\zeta_q}a_pa_q\left(\frac{1}{\zeta_p}+\frac{1}{\zeta_q}\right)=\sum_{p=1}^n\sum_{q=1}^n\zeta_p^{i-1}\zeta_q^{j-1}a_pa_q=\alpha_{i-1}\alpha_{j-1}.
\]
According to the rule of differentiating a determinant and expansion along one column, we have 
\begin{align*}
\dot F_k^{(i,j)}&=\sum_{p=0}^{k-1}\left|
\begin{array}{ccccccc}
I_{i,j}&\cdots&I_{i,j+p-1}&\alpha_{i-1}\alpha_{j+p-1}&I_{i,j+p+1}&\cdots & I_{i,j+k-1}\\
I_{i+1,j}&\cdots&I_{i+1,j+p-1}&\alpha_{i}\alpha_{j+p-1}&I_{i+1,j+p+1}&\cdots&I_{i+1,j+k-1}\\
\vdots&\ddots&\vdots&\vdots&\vdots&\ddots&\vdots\\
I_{i+k-1,j}&\cdots&I_{i+k-1,j+p-1}&\alpha_{i+k-2}\alpha_{j+p-1}&I_{i+k-1,j+p+1}&\cdots&I_{i+k-1,j+k-1}
\end{array}
\right|\\
&=\sum_{p=0}^{k-1}\alpha_{j+p-1}\sum_{q=0}^{k-1}(-1)^{p+q}\alpha_{i+q-1} F_k^{(i,j)}\binom{q+1}{p+1}\\
&=\sum_{p=0}^{k-1}\sum_{q=0}^{k-1}(-1)^{p+q}\alpha_{j+p-1}\alpha_{i+q-1}F_k^{(i,j)}\binom{q+1}{p+1},
\end{align*}
the last of which is nothing but the expansion formula of the determinant $-E_{k+1}^{(i,j)}$ along the first row and the first column. Thus we complete the proof.
\end{proof}

Now we are ready to confirm the validity of Th. \ref{th:DP_solution} by using determinant technique.
\begin{proof}[An alternative proof to Th. \ref{th:DP_solution}]
By using Lemma \ref{lem:FGUVW} and Corollary \ref{coro:FG}, it is not hard to see that the solution \eqref{sol:DPxt_form} of the DP peakon system admits equivalent expressions as:
\begin{align}
x_{k'}=\log\frac{2F_k^{(1,0)}}{G_k^{(1,1)}}=\log\frac{G_k^{(1,1)}}{F_{k-1}^{(2,1)}}=\frac{1}{2}\log\frac{2F_k^{(1,0)}}{F_{k-1}^{(2,1)}},\qquad m_{k'}=\frac{(G_{k}^{(1,1)})^2}{2F_{k}^{(1,1)}F_{k-1}^{(1,1)}}=\frac{F_{k}^{(1,0)}F_{k-1}^{(2,1)}}{F_{k}^{(1,1)}F_{k-1}^{(1,1)}}, \label{sol:DP_FG}
\end{align}
which will be used in our setup. What we need to confirm is $\{x_k,m_k\}_{k=1}^n$ given by \eqref{sol:DP_FG} with time dependence \eqref{DP_evl:b} satisfy the ODE system \eqref{DP_eq:peakon}.

Recall that the DP peakon ODE system \eqref{DP_eq:peakon} reads 
\begin{subequations}
 \begin{align*}
  \dot x_{k'}&=u(x_{k'})=\sum_{j=1}^{k-1}m_{j'}e^{x_{k'}-x_{j'}}+\sum_{j=k}^nm_{j'}e^{x_{j'}-x_{k'}} ,\\
\dot m_{k'}&=-2m_{k'}\langle u_x(x_{k'})\rangle=2m_{k'}\left(\sum_{j=k+1}^nm_{j'}e^{x_{j'}-x_{k'}}-\sum_{j=1}^{k-1}m_{j'}e^{x_{k'}-x_{j'}}\right).%\label{DP_m_ode}
\end{align*}
\end{subequations}
By use of \eqref{sol:DP_FG}, we can see that it is sufficient to prove 
\begin{subequations}
 \begin{align*}
&\left(\frac{1}{2}\log\frac{2F_k^{(1,0)}}{F_{k-1}^{(2,1)}}\right)_t=\frac{G_k^{(1,1)}}{F_{k-1}^{(2,1)}}\sum_{j=1}^{k-1}\frac{G_{j}^{(1,1)}F_{j-1}^{(2,1)}}{2F_{j}^{(1,1)}F_{j-1}^{(1,1)}}+\frac{G_k^{(1,1)}}{2F_k^{(1,0)}}\sum_{j=k}^n\frac{F_{j}^{(1,0)}G_{j}^{(1,1)}}{F_{j}^{(1,1)}F_{j-1}^{(1,1)}},\\
&\left(\frac{1}{2}\log\frac{F_{k}^{(1,0)}F_{k-1}^{(2,1)}}{F_{k}^{(1,1)}F_{k-1}^{(1,1)}}\right)_t=\frac{G_k^{(1,1)}}{2F_k^{(1,0)}}\sum_{j=k+1}^n\frac{F_{j}^{(1,0)}G_{j}^{(1,1)}}{F_{j}^{(1,1)}F_{j-1}^{(1,1)}}-\frac{G_k^{(1,1)}}{F_{k-1}^{(2,1)}}\sum_{j=1}^{k-1}\frac{G_{j}^{(1,1)}F_{j-1}^{(2,1)}}{2F_{j}^{(1,1)}F_{j-1}^{(1,1)}},
\end{align*}
which yields 
\begin{align*}
&-\frac{E_{k+1}^{(1,0)}}{2F_k^{(1,0)}}+\frac{E_{k}^{(2,1)}}{2F_{k-1}^{(2,1)}}=\frac{G_k^{(1,1)}}{F_{k-1}^{(2,1)}}\cdot\frac{G_{k-1}^{(1,2)}}{2F_{k-1}^{(1,1)}}+\frac{G_k^{(1,1)}}{2F_k^{(1,0)}}\cdot\frac{G_k^{(0,1)}}{F_{k-1}^{(1,1)}},\\
&-\frac{E_{k+1}^{(1,0)}}{2F_k^{(1,0)}}-\frac{E_{k}^{(2,1)}}{2F_{k-1}^{(2,1)}}+\frac{E_{k+1}^{(1,1)}}{2F_k^{(1,1)}}+\frac{E_{k}^{(1,1)}}{2F_{k-1}^{(1,1)}}=\frac{G_k^{(1,1)}}{2F_k^{(1,0)}}\cdot\frac{G_{k+1}^{(0,1)}}{F_{k}^{(1,1)}}-\frac{G_k^{(1,1)}}{F_{k-1}^{(2,1)}}\cdot\frac{G_{k-1}^{(1,2)}}{2F_{k-1}^{(1,1)}},
\end{align*}
by employing Lemma \ref{lem:der_F} and Corollary \ref{coro:idsum}. 
\end{subequations}
A further simplification leads to 
 \begin{align*}
 &F_{k-1}^{(1,1)}(-E_{k+1}^{(1,0)}F_{k-1}^{(2,1)}+E_{k}^{(2,1)}F_k^{(1,0)})=G_k^{(1,1)}(G_{k-1}^{(1,2)}F_k^{(1,0)}+F_{k-1}^{(2,1)}G_k^{(0,1)}),\\
 &\frac{E_{k+1}^{(1,1)}F_k^{(1,0)}-E_{k+1}^{(1,0)}F_k^{(1,1)}}{F_k^{(1,0)}F_k^{(1,1)}}-\frac{E_{k}^{(2,1)}F_{k-1}^{(1,1)}-E_{k}^{(1,1)}F_{k-1}^{(2,1)}}{F_{k-1}^{(2,1)}F_{k-1}^{(1,1)}}=\frac{G_k^{(1,1)}G_{k+1}^{(0,1)}}{F_k^{(1,0)}F_{k}^{(1,1)}}-\frac{G_k^{(1,1)}G_{k-1}^{(1,2)}}{F_{k-1}^{(2,1)}F_{k-1}^{(1,1)}}.
 \end{align*}
It is not difficult to see that they are valid by observing the identities \eqref{id:bi5}, \eqref{id:bi4} and \eqref{id:bi6} in Lemma \ref{lem:bi}. Therefore, we complete the proof.
\end{proof}

Clearly, our proof implies that the DP peakon lattice can be seen as an isospectral flow on a manifold cut out by determinant identities. Recall that the CH peakon lattice is a flow on a manifold cut out by determinant identities \cite{chang2014generalized}, while the Novikov peakon lattice corresponds to a manifold cut out by Pfaffian identities \cite{chang2017application}. The difference arises from the different structures of the corresponding tau functions belonging to different Lie algebra of A,B,C types.
\section{DP peakon and C-Toda lattices}\label{sec:ctoda}
Recall that the CH peakon lattice is intimately linked to the finite A-Toda lattice\footnote{The A-Toda lattice denotes the ordinary Toda lattice.  Due to the AKP type, this makes sense and forms a contrast to B,C-Toda lattices.} \cite{beals2001peakons}, while the Novikov peakon lattice is related to the finite B-Toda lattice \cite{chang2017application}.  This section is devoted to deriving the similar relevance for the DP peakon lattice and the associated C-Toda lattice.

As is indicated in the above section, the DP peakon ODE system \eqref{DP_eq:peakon} can be linearised by use of inverse spectral method into 
\[
\dot \zeta_p(t)=0, \qquad \dot a_p(t)=\frac{a_p(t)}{\zeta_p}.
\]
Consequently, the moments
\[
\alpha_i(t)=\sum_{p=1}^n\zeta_p^ia_p(t),\quad I_{i,j}(t)=\sum_{p=1}^n\sum_{q=1}^n\frac{\zeta_p^i\zeta_q^j}{\zeta_p+\zeta_q}a_p(t)a_q(t),
\]
evolve according to
\[
\dot \alpha_i=\alpha_{i-1},\quad \dot I_{i,j}=\alpha_{i-1}\alpha_{j-1}.
\]
In some sense, this can be viewed as a negative flow.

Now we plan to start from a linearised flow in the positive direction 
\begin{align*}
&\dot \xi_p(t)=0, && 0<\xi_1<\cdots<\xi_n,\\
&\dot c_p(t)=\xi_pc_p(t), &&c_p(0)>0,
\end{align*}
to seek a nonlinear ODE system. 

\subsection{Determinant formulae} \label{subsec:det}
Introduce the moments
\begin{equation}
\beta_i(t)=\sum_{p=1}^n\xi_p^ic_p(t),\qquad J_{i,j}(t)=J_{j,i}(t)=\sum_{p=1}^n\sum_{q=1}^n\frac{\xi_p^i\xi_q^j}{\xi_p+\xi_q}c_p(t)c_q(t),
\end{equation}
and consider the bimoment determinants $\tau_k^{(i,j)}$, $\sigma_k^{(i,j)}$, $\rho_k^{(i,j)}$, $\omega_k^{(i,j)}$, $\nu_k^{(i,j)}$ of size $k\times k$ respectively defined by
\begin{align}\label{def:tau}
\tau_k^{(i,j)}=\left|
\begin{array}{cccc}
J_{i,j}&J_{i,j+1}&\cdots & J_{i,j+k-1}\\
J_{i+1,j}&J_{i+1,j+1}&\cdots&J_{i+1,j+k-1}\\
\vdots&\vdots&\ddots&\vdots\\
J_{i+k-1,j}&J_{i+k-1,j+1}&\cdots&J_{i+k-1,j+k-1}
\end{array}
\right|=\tau_k^{(j,i)}
\end{align}
with the convention $\tau_0^{(i,j)}=1$ and $\tau_k^{(i,j)}=0$ for $k<0$, and
\begin{align}
\sigma_k^{(i,j)}=\left|
\begin{array}{ccccc}
\beta_{i}&J_{i,j}&J_{i,j+1}&\cdots & J_{i,j+k-2}\\
\beta_{i+1}&J_{i+1,j}&J_{i+1,j+1}&\cdots&J_{i+1,j+k-2}\\
\vdots&\vdots&\vdots&\ddots&\vdots\\
\beta_{i+k-1}&J_{i+k-1,j}&J_{i+k-1,j+1}&\cdots&J_{i+k-1,j+k-2}
\end{array}
\right|
\end{align}
with the convention $\sigma_1^{(i,j)}=\beta_{i}$ and $\sigma_k^{(i,j)}=0$ for $k<1$, and
\begin{align}
\rho_k^{(i,j)}=\left|
\begin{array}{ccccc}
0&\beta_{j}&\beta_{j+1}&\cdots&\beta_{j+k-2}\\
\beta_{i}&J_{i,j}&J_{i,j+1}&\cdots & J_{i,j+k-2}\\
\beta_{i+1}&J_{i+1,j}&J_{i+1,j+1}&\cdots&J_{i+1,j+k-2}\\
\vdots&\vdots&\vdots&\ddots&\vdots\\
\beta_{i+k-2}&J_{i+k-2,j}&J_{i+k-2,j+1}&\cdots&J_{i+k-2,j+k-2}
\end{array}
\right|=\rho_k^{(j,i)}
\end{align}
with the convention $\rho_k^{(i,j)}=0$ for $k<2$, and
\begin{align}
\omega_k^{(i,j)}=\left|
\begin{array}{ccccc}
J_{i,j}&J_{i,j+1}&\cdots & J_{i,j+k-2}&J_{i,j+k}\\
J_{i+1,j}&J_{i+1,j+1}&\cdots&J_{i+1,j+k-2}&J_{i+1,j+k}\\
\vdots&\vdots&\ddots&\vdots&\vdots\\
J_{i+k-1,j}&J_{i+k-1,j+1}&\cdots&J_{i+k-1,j+k-2}&J_{i+k-1,j+k}
\end{array}
\right|
\end{align}
with the convention $\omega_1^{(i,j)}=J_{i,j+1}$ and $\omega_k^{(i,j)}=0$ for $k<1$, and
\begin{align}
\nu_k^{(i,j)}=\left|
\begin{array}{cccccc}
0&\beta_{j}&\beta_{j+1}&\cdots&\beta_{j+k-3}&\beta_{j+k-1}\\
\beta_{i}&J_{i,j}&J_{i,j+1}&\cdots & J_{i,j+k-3}& J_{i,j+k-1}\\
\beta_{i+1}&J_{i+1,j}&J_{i+1,j+1}&\cdots&J_{i+1,j+k-3}& J_{i+1,j+k-1}\\
\vdots&\vdots&\vdots&\ddots&\vdots&\vdots\\
\beta_{i+k-2}&J_{i+k-2,j}&J_{i+k-2,j+1}&\cdots&J_{i+k-2,j+k-3}& J_{i+k-2,j+k-1}
\end{array}
\right|
\end{align}
with the convention  $\nu_k^{(i,j)}=0$ for $k<2$. Note that, for the determinant formulae of $\omega_k^{(i,j)}$ and $\nu_{k}^{(i,j)}$, the index jumps an extra step between the last two columns, which differ from those for $\tau_k^{(i,j)}$ and $\rho_k^{(i,j)}$ respectively.

By using the Jacobi identity \eqref{id:jacobi} , it is not hard to see the following bilinear identities hold.
\begin{lemma}\label{lem:bi_id_tau}
For any $i,j\in \mathbb{Z}$, $k\in  \mathbb{N}_+$, there hold
%\begin{subequations}
\begin{align}
&\rho_{k+1}^{(i,j)}\tau_{k-1}^{(i,j)}=\tau_{k}^{(i,j)}\rho_{k}^{(i,j)}-\sigma_{k}^{(i,j)}\sigma_{k}^{(j,i)},\label{bi_id_tau1}\\
&\sigma_{k}^{(j,i)}\sigma_{k-1}^{(i,j)}=\nu_{k}^{(i,j)}\tau_{k-1}^{(i,j)}-\rho_{k}^{(i,j)}\omega_{k-1}^{(i,j)}.\label{bi_id_tau2}
\end{align}
%\end{subequations}
\end{lemma}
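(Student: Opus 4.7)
The plan is to derive both bilinear identities as applications of the Jacobi determinant identity \eqref{id:jacobi} to the $(k+1)\times(k+1)$ determinant $\mathcal{D}:=\rho_{k+1}^{(i,j)}$, in direct analogy with the strategy employed for Lemma \ref{lem:bi}. For \eqref{bi_id_tau1}, I would apply \eqref{id:jacobi} with the symmetric choice $i_1=j_1=1$, $i_2=j_2=k+1$. The cored block $\mathcal{D}(1,k+1;1,k+1)$ is the interior $(k-1)\times(k-1)$ block of $J$-entries, which is $\tau_{k-1}^{(i,j)}$. The four $k\times k$ minors work out to $\mathcal{D}(1;1)=\tau_k^{(i,j)}$, $\mathcal{D}(k+1;k+1)=\rho_k^{(i,j)}$, $\mathcal{D}(1;k+1)=\sigma_k^{(i,j)}$, and $\mathcal{D}(k+1;1)=\sigma_k^{(j,i)}$; the last identification uses transposition together with the symmetry $J_{p,q}=J_{q,p}$. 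Substituting into \eqref{id:jacobi} gives \eqref{bi_id_tau1} at once.

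For \eqref{bi_id_tau2}, I would apply \eqref{id:jacobi} to the same $\mathcal{D}$ with two further index choices that produce the ``off-diagonal'' cores $\omega_{k-1}^{(i,j)}$ and $\sigma_{k-1}^{(i,j)}$. Namely, $i_1=1,\,i_2=k+1,\,j_1=1,\,j_2=k$ gives
\[\rho_{k+1}^{(i,j)}\omega_{k-1}^{(i,j)}=\tau_k^{(i,j)}\nu_k^{(i,j)}-X\,\sigma_k^{(j,i)},\]
while $i_1=1,\,i_2=k+1,\,j_1=k,\,j_2=k+1$ gives
\[\rho_{k+1}^{(i,j)}\sigma_{k-1}^{(i,j)}=X\,\rho_k^{(i,j)}-\sigma_k^{(i,j)}\nu_k^{(i,j)},\]
where $X:=\mathcal{D}(1;k)$ is a $k\times k$ minor that does not correspond to any named quantity. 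Multiplying the first by $\rho_k^{(i,j)}$ and the second by $\sigma_k^{(j,i)}$ and adding eliminates $X$; the result factors as
\[\rho_{k+1}^{(i,j)}\bigl(\rho_k^{(i,j)}\omega_{k-1}^{(i,j)}+\sigma_k^{(j,i)}\sigma_{k-1}^{(i,j)}\bigr)=\nu_k^{(i,j)}\bigl(\tau_k^{(i,j)}\rho_k^{(i,j)}-\sigma_k^{(i,j)}\sigma_k^{(j,i)}\bigr).\]
Using \eqref{bi_id_tau1} to rewrite the right-hand bracket as $\rho_{k+1}^{(i,j)}\tau_{k-1}^{(i,j)}$ and cancelling the common factor of $\rho_{k+1}^{(i,j)}$ produces \eqref{bi_id_tau2}.

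The main obstacle is the bookkeeping needed to identify each minor of $\rho_{k+1}^{(i,j)}$ with the correct named quantity---most subtly, recognising $\sigma_k^{(j,i)}$ in $\mathcal{D}(k+1;1)$ through transposition and the bi-symmetry of $J$. The elimination of the unnamed minor $X$ by combining the two auxiliary Jacobi identities and substituting \eqref{bi_id_tau1} is the only mildly nontrivial algebraic step; the final division by $\rho_{k+1}^{(i,j)}$ is legitimate as a polynomial identity in the underlying parameters $\xi_p,\,c_p$.
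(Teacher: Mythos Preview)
Your argument is correct. For \eqref{bi_id_tau1} you do exactly what the paper does: apply the Jacobi identity to $\mathcal{D}=\rho_{k+1}^{(i,j)}$ with $i_1=j_1=1$, $i_2=j_2=k+1$.

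For \eqref{bi_id_tau2} your route is genuinely different from the paper's. The paper applies the Jacobi identity \emph{once} to the auxiliary $(k+1)\times(k+1)$ determinant
\[
\mathcal{D}_2=
\left|
\begin{array}{ccccc}
1&0&0&\cdots&0\\
0&\beta_{j}&\beta_{j+1}&\cdots&\beta_{j+k-1}\\
\beta_{i}&J_{i,j}&J_{i,j+1}&\cdots & J_{i,j+k-1}\\
\vdots&\vdots&\vdots&\ddots&\vdots\\
\beta_{i+k-2}&J_{i+k-2,j}&J_{i+k-2,j+1}&\cdots&J_{i+k-2,j+k-1}
\end{array}
\right|
\]
with $i_1=1,\ i_2=2,\ j_1=k,\ j_2=k+1$. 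Since $\mathcal{D}_2=\sigma_k^{(j,i)}$ (expand along the first row), the six minors produced are precisely $\sigma_k^{(j,i)},\ \sigma_{k-1}^{(i,j)},\ \nu_k^{(i,j)},\ \tau_{k-1}^{(i,j)},\ \rho_k^{(i,j)},\ \omega_{k-1}^{(i,j)}$, and \eqref{bi_id_tau2} drops out immediately. You instead stay with $\mathcal{D}=\rho_{k+1}^{(i,j)}$, apply Jacobi twice with column pairs $(1,k)$ and $(k,k+1)$, and eliminate the unnamed minor $X=\mathcal{D}(1;k)$ by a linear combination followed by an appeal to \eqref{bi_id_tau1}. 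The paper's trick of bordering with $(1,0,\dots,0)$ is cleaner---one Jacobi application, no auxiliary quantity, no cancellation---whereas your approach has the virtue of working entirely inside the original determinant and illustrates nicely how such three-term identities can be synthesized from pairs of Jacobi relations. Your justification of the division by $\rho_{k+1}^{(i,j)}$ as an identity of polynomials in the parameters is sound.
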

\begin{proof}
Taking 
$$
\mathcal{D}_1=\rho_{k+1}^{(i,j)}, \qquad i_1=j_1=1, \quad i_2=j_2=k+1,
$$
we can obtain \eqref{bi_id_tau1}  by use of the Jacobi identity  \eqref{id:jacobi} .

If we consider
\begin{align*}
&\mathcal{D}_2=
\left|
\begin{array}{ccccc}
1&0&0&\cdots&0\\
0&\beta_{j}&\beta_{j+1}&\cdots&\beta_{j+k-1}\\
\beta_{i}&J_{i,j}&J_{i,j+1}&\cdots & J_{i,j+k-1}\\
\beta_{i+1}&J_{i+1,j}&J_{i+1,j+1}&\cdots&J_{i+1,j+k-1}\\
\vdots&\vdots&\vdots&\ddots&\vdots\\
\beta_{i+k-2}&J_{i+k-2,j}&J_{i+k-2,j+1}&\cdots&J_{i+k-2,j+k-1}
\end{array}
\right|,\\
&\quad  i_1=1, \quad j_1=k, \quad i_2=2,\quad j_2=k+1,
\end{align*}
then \eqref{bi_id_tau2} will be derived.
\end{proof}
\subsection{C-Toda lattice and its Lax pair}
If 
\begin{align}
\dot \xi_p=0, \qquad \dot c_p=\xi_pc_p, \label{evo_xib}
\end{align}
then $ \beta_i,J_{i,j}$ evolve according to
\begin{equation}
\dot \beta_i=\beta_{i+1},\quad \dot J_{i,j}=\beta_{i}\beta_{j}=J_{i+1,j}+J_{i,j+1}.
\end{equation}
Moreover, the evolution of the determinants $\tau_k^{(i,j)}$ could be expressed in terms of certain closed forms.
\begin{lemma} Under the condition \eqref{evo_xib}, the determinants $\tau_k^{(i,j)}$ admit the evolution:
\begin{align}
&\dot \tau_k^{(i,j)}=-\rho_{k+1}^{(i,j)}. \label{evo:tau1}
\end{align}
Besides, there also hold
\begin{align}
&\dot \tau_k^{(i,j)}=\omega_{k}^{(i,j)}+\omega_{k}^{(j,i)},\label{evo:tau2}\\
&\ddot \tau_k^{(i,j)}=-\nu_{k+1}^{(i,j)}-\nu_{k+1}^{(j,i)}.\label{evo:tau12}
\end{align}

\end{lemma}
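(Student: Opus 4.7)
The plan is to derive all three identities from the two basic moment evolutions induced by \eqref{evo_xib}, namely
\begin{equation*}
\dot\beta_i = \beta_{i+1}, \qquad \dot J_{i,j} = \beta_i\beta_j = J_{i+1,j}+J_{i,j+1},
\end{equation*}
combined with Leibniz's rule for determinants and the observation that a determinant with two equal columns (or rows) vanishes. These two forms of $\dot J_{i,j}$ will be used alternately: the multiplicative form $\beta_i\beta_j$ feeds into bordered-determinant reconstructions, whereas the additive form $J_{i+1,j}+J_{i,j+1}$ interacts cleanly with cofactor expansions.

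For \eqref{evo:tau1}, I would mimic the proof of Lemma \ref{lem:der_F} verbatim. Differentiating $\tau_k^{(i,j)}$ entrywise with $\dot J_{i,j}=\beta_i\beta_j$ produces the double sum
\begin{equation*}
\dot\tau_k^{(i,j)} = \sum_{s,l=1}^{k}(-1)^{s+l}\beta_{i+s-1}\beta_{j+l-1}\det(M^{(s,l)}),
\end{equation*}
where $M^{(s,l)}$ is the $(s,l)$-minor of the bimoment matrix defining $\tau_k^{(i,j)}$. This double sum is exactly the expansion of $-\rho_{k+1}^{(i,j)}$ along its bordering first row and first column.

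For \eqref{evo:tau2}, I would instead use the additive form $\dot J_{i,j}=J_{i+1,j}+J_{i,j+1}$, splitting $\dot\tau_k^{(i,j)}$ into two corresponding sums. The column-shift piece $J_{i+s-1,j+l}$, after cofactor expansion along column $l$, amounts to replacing column $l$ of the bimoment matrix by its successor column $l+1$; this produces a duplicate column, and hence zero, for every $l<k$, leaving only the $l=k$ survivor, which is exactly $\omega_k^{(i,j)}$. The row-shift piece $J_{i+s,j+l-1}$ vanishes analogously for $s<k$, and the $s=k$ survivor equals $\omega_k^{(j,i)}$ after invoking the symmetry $J_{a,b}=J_{b,a}$ and transposing.

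For \eqref{evo:tau12}, the shortest route is to differentiate \eqref{evo:tau2}, so that $\ddot\tau_k^{(i,j)} = \dot\omega_k^{(i,j)} + \dot\omega_k^{(j,i)}$; it then suffices to show $\dot\omega_k^{(i,j)} = -\nu_{k+1}^{(i,j)}$. This is a direct replay of the argument for \eqref{evo:tau1} applied to $\omega_k^{(i,j)}$ in place of $\tau_k^{(i,j)}$: applying $\dot J_{i,j}=\beta_i\beta_j$ entrywise yields a double sum whose row index runs over $\{i,\dots,i+k-1\}$ and whose column index runs over the jumping set $\{j,j+1,\dots,j+k-2,j+k\}$, which is exactly the bordered-determinant expansion of $-\nu_{k+1}^{(i,j)}$, whose $\beta$-border row also jumps from $\beta_{j+k-2}$ to $\beta_{j+k}$ at the last position. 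The main obstacle is notational rather than conceptual: one must verify that the column-index jump inside $\omega_k^{(i,j)}$ lines up with the corresponding jump in the $\beta$-border of $\nu_{k+1}^{(i,j)}$ and that the sign conventions in the cofactor expansion match, but no new idea beyond Lemma \ref{lem:der_F} is required.
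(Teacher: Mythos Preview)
Your proposal is correct and follows essentially the same approach as the paper: the same two forms of $\dot J_{i,j}$ are deployed in the same places, the bordered-determinant reconstruction for \eqref{evo:tau1} matches Lemma~\ref{lem:der_F}, and \eqref{evo:tau12} is obtained in both by differentiating \eqref{evo:tau2} and proving $\dot\omega_k^{(i,j)}=-\nu_{k+1}^{(i,j)}$ via the same bordered-determinant trick. The only cosmetic difference is in \eqref{evo:tau2}: the paper expands $\mathcal{A}_1$ along a column and invokes Laplace expansion to kill the terms with $q\le k-2$, whereas you observe directly that the row-shift replaces row $s$ by row $s+1$ and hence produces a duplicate row for $s<k$; your phrasing is slightly cleaner but the content is identical.
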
\label{lem:der_tau}
\begin{proof}
First we make use of the evolution relation
$$\dot J_{i,j}=\beta_{i}\beta_{j}.$$
According to the rule of differentiating a determinant and expansion along one column, we have 
\begin{align*}
\dot \tau_k^{(i,j)}&=\sum_{p=0}^{k-1}\left|
\begin{array}{ccccccc}
J_{i,j}&\cdots&J_{i,j+p-1}&\beta_{i}\beta_{j+p}&J_{i,j+p+1}&\cdots & J_{i,j+k-1}\\
J_{i+1,j}&\cdots&J_{i+1,j+p-1}&\beta_{i+1}\beta_{j+p}&J_{i+1,j+p+1}&\cdots&J_{i+1,j+k-1}\\
\vdots&\ddots&\vdots&\vdots&\vdots&\ddots&\vdots\\
J_{i+k-1,j}&\cdots&J_{i+k-1,j+p-1}&\beta_{i+k-1}\beta_{j+p}&J_{i+k-1,j+p+1}&\cdots&J_{i+k-1,j+k-1}
\end{array}
\right|\\
&=\sum_{p=0}^{k-1}\beta_{j+p}\sum_{q=0}^{k-1}(-1)^{p+q}\beta_{i+q} \tau_k^{(i,j)}\binom{q+1}{p+1}\\
&=\sum_{p=0}^{k-1}\sum_{q=0}^{k-1}(-1)^{p+q}\beta_{j+p}\beta_{i+q}\tau_k^{(i,j)}\binom{q+1}{p+1}.
\end{align*}
A closer check gives that the formula in the last row is no other than the expansion formula of the determinant $-\rho_{k+1}^{(i,j)}$ along the first row and the first column, which leads to \eqref{evo:tau1}.

In order to confirm \eqref{evo:tau2},  we shall employ the evolution 
$$\dot J_{i,j}=J_{i+1,j}+J_{i,j+1}.$$
A direct calculation gives
\begin{align*}
%\dot \tau_k^{(i,j)}&=\sum_{p=0}^{k-1}\left|
%\begin{array}{ccccccc}
%J_{i,j}&\cdots&J_{i,j+p-1}&J_{i+1,j+p}+J_{i,j+p+1}&J_{i,j+p+1}&\cdots & J_{i,j+k-1}\\
%J_{i+1,j}&\cdots&J_{i+1,j+p-1}&J_{i+2,j+p}+J_{i+1,j+p+1}&J_{i+1,j+p+1}&\cdots&J_{i+1,j+k-1}\\
%\vdots&\ddots&\vdots&\vdots&\vdots&\ddots&\vdots\\
%J_{i+k-1,j}&\cdots&J_{i+k-1,j+p-1}&J_{i+k,j+p}+J_{i+k-1,j+p+1}&J_{i+k-1,j+p+1}&\cdots&J_{i+k-1,j+k-1}
%\end{array}
%\right|\\
\dot \tau_k^{(i,j)}&=\sum_{p=0}^{k-1}\left|
\begin{array}{ccccccc}
J_{i,j}&\cdots&J_{i,j+p-1}&J_{i+1,j+p}&J_{i,j+p+1}&\cdots & J_{i,j+k-1}\\
J_{i+1,j}&\cdots&J_{i+1,j+p-1}&J_{i+2,j+p}&J_{i+1,j+p+1}&\cdots&J_{i+1,j+k-1}\\
\vdots&\ddots&\vdots&\vdots&\vdots&\ddots&\vdots\\
J_{i+k-1,j}&\cdots&J_{i+k-1,j+p-1}&J_{i+k,j+p}&J_{i+k-1,j+p+1}&\cdots&J_{i+k-1,j+k-1}
\end{array}
\right|\\
&+\sum_{p=0}^{k-1}\left|
\begin{array}{ccccccc}
J_{i,j}&\cdots&J_{i,j+p-1}&J_{i,j+p+1}&J_{i,j+p+1}&\cdots & J_{i,j+k-1}\\
J_{i+1,j}&\cdots&J_{i+1,j+p-1}&J_{i+1,j+p+1}&J_{i+1,j+p+1}&\cdots&J_{i+1,j+k-1}\\
\vdots&\ddots&\vdots&\vdots&\vdots&\ddots&\vdots\\
J_{i+k-1,j}&\cdots&J_{i+k-1,j+p-1}&J_{i+k-1,j+p+1}&J_{i+k-1,j+p+1}&\cdots&J_{i+k-1,j+k-1}
\end{array}
\right|\\
&=\mathcal{A}_1+\mathcal{A}_2,\\
\end{align*}
Regarding the first summation, we have 
\begin{align*}
\mathcal{A}_1=&\sum_{p=0}^{k-1}\sum_{q=0}^{k-1}(-1)^{p+q}J_{i+q+1,j+p} \tau_k^{(i,j)}\binom{q+1}{p+1}=\sum_{q=0}^{k-1}\sum_{p=0}^{k-1}(-1)^{p+q}J_{i+q+1,j+p} \tau_k^{(i,j)}\binom{q+1}{p+1}\\
=&\sum_{q=0}^{k-2}\sum_{p=0}^{k-1}(-1)^{p+q}J_{i+q+1,j+p} \tau_k^{(i,j)}\binom{q+1}{p+1}+\sum_{p=0}^{k-1}(-1)^{p+k-1}J_{i+k,j+p} \tau_k^{(i,j)}\binom{k}{p+1}.
\end{align*}
It follows from the Laplace expansion that, 
$$\sum_{p=0}^{k-1}(-1)^{p+q}J_{i+q+1,j+p} \tau_k^{(i,j)}\binom{q+1}{p+1}$$
 vanishes for fixed $0\leq q\leq k-2$ and subsequently
$$\sum_{p=0}^{k-1}(-1)^{p+k-1}J_{i+k,j+p} \tau_k^{(i,j)}\binom{k}{p+1}=\omega_k^{(j,i)}=\mathcal{A}_1.$$
As for the second summation, it is not hard to see that only the term for $p=k-1$ contributes and one is led to $$\mathcal{A}_2=\omega_k^{(i,j)}.$$ Thus we get 
$$\dot \tau_k^{(i,j)}=\mathcal{A}_1+\mathcal{A}_2=\omega_k^{(j,i)}+\omega_k^{(i,j)}.$$

Finally, let's turn to the proof of \eqref{evo:tau12}. It is sufficient to prove $$\dot \omega_k^{(i,j)}=-\nu_{k+1}^{(i,j)}.$$ This can be similarly derived by using 
$$\dot J_{i,j}=\beta_{i}\beta_{j}$$
as the process for  the proof to \eqref{evo:tau1}. We omit the detail here.
\end{proof}

Combining Lemma \ref{lem:bi_id_tau} and Lemma \ref{lem:der_tau}, we immediately obtain the following bilinear ODE system.
\begin{coro} \label{coro:tau_bi}
For any $j\in \mathbb{Z}$, $k\in  \mathbb{N}_+$, there hold
\begin{subequations}\label{eq:bi_id11}
\begin{align}
&\dot \tau_k^{(j,j)}\tau_{k-1}^{(j,j)}-\tau_k^{(j,j)} \dot \tau_{k-1}^{(j,j)}=\left(\sigma_k^{(j,j)} \right)^2,\\
&\ddot \tau_k^{(j,j)}  \tau_k^{(j,j)}-\left(\dot  \tau_k^{(j,j)}\right)^2=-2 \sigma_{k+1}^{(j,j)} \sigma_k^{(j,j)}.
\end{align}
\end{subequations}
\end{coro}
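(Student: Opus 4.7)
The plan is simply to combine the bilinear identities of Lemma~\ref{lem:bi_id_tau} with the time-derivative formulas of Lemma~\ref{lem:der_tau}, specialized to the diagonal case $i=j$. The key preliminary observation is that setting $i=j$ in \eqref{evo:tau1} and \eqref{evo:tau2} gives
\[
\dot\tau_k^{(j,j)} = -\rho_{k+1}^{(j,j)} = 2\omega_k^{(j,j)},
\]
so in particular $\omega_k^{(j,j)} = -\tfrac12 \rho_{k+1}^{(j,j)}$; and \eqref{evo:tau12} gives $\ddot\tau_k^{(j,j)} = -2\nu_{k+1}^{(j,j)}$.

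For the first identity I would substitute $\dot \tau_k^{(j,j)} = -\rho_{k+1}^{(j,j)}$ and $\dot \tau_{k-1}^{(j,j)} = -\rho_k^{(j,j)}$ into the left-hand side, which collapses it to $\tau_k^{(j,j)} \rho_k^{(j,j)} - \rho_{k+1}^{(j,j)}\tau_{k-1}^{(j,j)}$; this is exactly $(\sigma_k^{(j,j)})^2$ by the bilinear identity \eqref{bi_id_tau1} with $i=j$.

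For the second identity, substituting $\ddot\tau_k^{(j,j)} = -2\nu_{k+1}^{(j,j)}$ together with $\dot\tau_k^{(j,j)} = -\rho_{k+1}^{(j,j)}$ on the left produces $-2\nu_{k+1}^{(j,j)}\tau_k^{(j,j)} - (\rho_{k+1}^{(j,j)})^2$. On the right, the bilinear identity \eqref{bi_id_tau2} with $k$ replaced by $k+1$ and $i=j$ reads $\sigma_{k+1}^{(j,j)}\sigma_k^{(j,j)} = \nu_{k+1}^{(j,j)}\tau_k^{(j,j)} - \rho_{k+1}^{(j,j)}\omega_k^{(j,j)}$, and the proportionality $\omega_k^{(j,j)} = -\tfrac12\rho_{k+1}^{(j,j)}$ rewrites the second term as $+\tfrac12(\rho_{k+1}^{(j,j)})^2$; multiplying through by $-2$ then matches the expression obtained from the left-hand side.

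There is no serious obstacle here: the corollary is a clean algebraic specialization. The only subtle point is the factor of two arising from $\dot\tau_k^{(j,j)} = \omega_k^{(j,j)} + \omega_k^{(j,j)} = 2\omega_k^{(j,j)}$, which is precisely what forces the coefficient $-2$ on the right-hand side of the second bilinear ODE and what reconciles the two seemingly different expressions \eqref{evo:tau1} and \eqref{evo:tau2} for $\dot\tau_k^{(j,j)}$.
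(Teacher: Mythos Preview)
Your proposal is correct and follows exactly the approach the paper indicates: the corollary is stated there as an immediate consequence of combining Lemma~\ref{lem:bi_id_tau} and Lemma~\ref{lem:der_tau}, and you have simply supplied the explicit specialization to $i=j$ (including the key observation $\omega_k^{(j,j)}=-\tfrac12\rho_{k+1}^{(j,j)}$) that the paper leaves implicit.
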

Eliminating the variables $\sigma_k^{(j,j)}$ in \eqref{eq:bi_id11} leads to a quartic lattice
\begin{align}\label{eq:ctoda_quartic}
\left(\ddot \tau_k^{(j,j)}  \tau_k^{(j,j)}-\left(\dot  \tau_k^{(j,j)}\right)^2\right)^2=4\left(\dot \tau_{k+1}^{(j,j)}\tau_{k}^{(j,j)}-\tau_{k+1}^{(j,j)} \dot \tau_{k}^{(j,j)}\right)\left(\dot \tau_k^{(j,j)}\tau_{k-1}^{(j,j)}-\tau_k^{(j,j)} \dot \tau_{k-1}^{(j,j)}\right).
\end{align}
{as we will show in Section \ref{sec:schief}, which is indeed related to the fully discrete CKP lattice proposed by Schief \cite{bobenko2015discrete,bobenko2017discrete,schief2003lattice}.}
If we introduce the variables 
$$u_k=\frac{\tau_{k+1}^{(j,j)}\tau_{k-1}^{(j,j)}}{({\tau_{k}^{(j,j)}})^2},\qquad b_k=\left(\log\frac{\tau_{k+1}^{(j,j)}}{\tau_{k}^{(j,j)}}\right)_t=\frac{(\sigma_{k+1}^{(j,j)})^2}{\tau_{k+1}^{(j,j)}\tau_{k}^{(j,j)}},$$
then it is not hard to obtain
%by using Corollary \ref{coro:tau_bi}, that
%$$b_k=\left(\log\frac{\tau_{k+1}^{(j,j)}}{\tau_{k}^{(j,j)}}\right)_t=\frac{(\sigma_{k+1}^{(j,j)})^2}{\tau_{k+1}^{(j,j)}\tau_{k}^{(j,j)}}$$
%and
 $\{u_k,b_k\}$ satisfy a nonlinear ODE system: 
\begin{subequations}
\begin{align}
&\dot u_{k}=u_{k}(b_{k}-b_{k-1}),  \qquad\qquad \qquad\qquad \ k=1,\dots,n-1,\\
&\dot b_k=2(\sqrt{ u_kb_kb_{k-1} }-\sqrt{ u_{k+1}b_{k+1}b_{k} }), \quad k=0,\dots,n-1,
\end{align}
\end{subequations}
with $u_0=u_n=0$. This is nothing but the nonlinear flow, which we are searching for from the positive linear flow \eqref{evo_xib}.
We shall call this ODE system ``finite Toda lattice of CKP type'' (finite C-Toda lattice) due to the similar form to the A-Toda lattice and the tau function structure of the CKP type (see Table \ref{comp_abc}), { and also because of the imitate relation with  the fully discrete CKP lattice in \cite{bobenko2015discrete,bobenko2017discrete,schief2003lattice}.}

Based on the above derivation, we can conclude that 
\begin{theorem}\label{th:ctoda}
The finite C-Toda lattice 
\begin{subequations}
\label{eq:ctoda}
\begin{align}
&\dot u_{k}=u_{k}(b_{k}-b_{k-1}),  \qquad\qquad \qquad\qquad \ k=1,\dots,n-1,\\
&\dot b_k=2(\sqrt{ u_kb_kb_{k-1} }-\sqrt{ u_{k+1}b_{k+1}b_{k} }), \quad k=0,\dots,n-1,
\end{align}
\end{subequations}
admit a solution
$$u_k=\frac{\tau_{k+1}^{(j,j)}\tau_{k-1}^{(j,j)}}{({\tau_{k}^{(j,j)}})^2},\qquad b_k=\frac{(\sigma_{k+1}^{(j,j)})^2}{\tau_{k+1}^{(j,j)}\tau_{k}^{(j,j)}},$$
where $\tau_{k}^{(j,j)}$ is defined as \eqref{def:tau} with 
$$J_{i,j}(t)=J_{j,i}(t)=\sum_{p=1}^n\sum_{q=1}^n\frac{\xi_p^i\xi_q^j}{\xi_p+\xi_q}c_p(t)c_q(t),\qquad 0<\xi_1<\cdots<\xi_n,\  c_p(t)=c_p(0)e^{\xi_pt}>0.$$
\end{theorem}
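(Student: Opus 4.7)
The plan is to verify the two ODEs of \eqref{eq:ctoda} directly by combining the bilinear identities of Corollary \ref{coro:tau_bi} with elementary logarithmic differentiation; no new analytic input is required beyond what has already been assembled in Section \ref{subsec:det}.

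First, I would observe that shifting $k \to k+1$ in the first identity of \eqref{eq:bi_id11} yields $\dot\tau_{k+1}^{(j,j)}\tau_k^{(j,j)} - \tau_{k+1}^{(j,j)}\dot\tau_k^{(j,j)} = (\sigma_{k+1}^{(j,j)})^2$, so that $b_k = (\log(\tau_{k+1}^{(j,j)}/\tau_k^{(j,j)}))_t$. Telescoping gives $b_k - b_{k-1} = (\log u_k)_t = \dot u_k / u_k$, which is exactly the first equation of \eqref{eq:ctoda}.

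Next, I would differentiate $b_k = (\log \tau_{k+1}^{(j,j)})_t - (\log \tau_k^{(j,j)})_t$ once more, obtaining $\dot b_k = [\ddot\tau_{k+1}^{(j,j)}\tau_{k+1}^{(j,j)} - (\dot\tau_{k+1}^{(j,j)})^2]/(\tau_{k+1}^{(j,j)})^2 - [\ddot\tau_k^{(j,j)}\tau_k^{(j,j)} - (\dot\tau_k^{(j,j)})^2]/(\tau_k^{(j,j)})^2$. The second identity of \eqref{eq:bi_id11} rewrites each numerator as $-2\sigma_{k+2}^{(j,j)}\sigma_{k+1}^{(j,j)}$ and $-2\sigma_{k+1}^{(j,j)}\sigma_k^{(j,j)}$ respectively, so that
\[
\dot b_k = \frac{2\sigma_{k+1}^{(j,j)}\sigma_k^{(j,j)}}{(\tau_k^{(j,j)})^2} - \frac{2\sigma_{k+2}^{(j,j)}\sigma_{k+1}^{(j,j)}}{(\tau_{k+1}^{(j,j)})^2}.
\]
A direct substitution from the definitions of $u_k, b_k$ shows that $u_k b_k b_{k-1}$ is the square of $\sigma_{k+1}^{(j,j)}\sigma_k^{(j,j)}/(\tau_k^{(j,j)})^2$, and similarly for $u_{k+1}b_{k+1}b_k$, so the display above coincides with $2(\sqrt{u_k b_k b_{k-1}} - \sqrt{u_{k+1}b_{k+1}b_k})$ once the square roots are extracted with the correct sign.

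The only subtlety, and the main obstacle, is this sign issue. Because $\xi_p > 0$ and $c_p(t) = c_p(0)e^{\xi_p t} > 0$, the measure $\mu = \sum_p c_p \delta_{\xi_p}$ is positive, and the Cauchy-type arguments establishing $F_k^{(i,j)}>0$ via the multiple-integral representation of Remark \ref{rem:multiple}, together with those giving $G_k^{(1,1)}>0$ in Lemma \ref{lem:FGUVW}, transfer verbatim to yield $\tau_k^{(j,j)}>0$ and $\sigma_k^{(j,j)}>0$ for $1 \le k \le n$; thus both sides of each square-root identity are manifestly nonnegative. The boundary conditions $u_0 = u_n = 0$ in \eqref{eq:ctoda} follow immediately from the conventions $\tau_{-1}^{(j,j)} = 0$ and $\tau_{n+1}^{(j,j)} = 0$, the latter by the same rank-$n$ matrix factorisation used in the proof of Lemma \ref{lem:FGE0}. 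This completes the verification.
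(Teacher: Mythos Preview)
Your proposal is correct and follows essentially the same route as the paper: the derivation preceding Theorem~\ref{th:ctoda} introduces $u_k,b_k$ via the tau-quotients, identifies $b_k=(\log(\tau_{k+1}^{(j,j)}/\tau_k^{(j,j)}))_t$ from the first identity of Corollary~\ref{coro:tau_bi}, and then asserts that the C-Toda system follows, which is exactly your telescoping/second-derivative computation. You in fact supply more detail than the paper, in particular the positivity argument needed to extract the square roots with the correct sign, which the paper leaves implicit.
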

\begin{remark}
The readers might have observed that there is some freedom in the solutions. 
But, notice that, for any fixed $j$, $\tau_{k}^{(j,j)}$ defined on  a set of $\{\zeta_p,c_p\}_{p=1}^n$ may be regarded as a new determinant $\tau_{k}^{(1,1)}$ defined on a modified set of $\{\zeta_p,\zeta_{p}^{j-1}c_p\}_{p=1}^n$. This suggests that it is sufficient to consider the solution 
$$u_k=\frac{\tau_{k+1}^{(1,1)}\tau_{k-1}^{(1,1)}}{({\tau_{k}^{(1,1)}})^2},\qquad b_k=\frac{(\sigma_{k+1}^{(1,1)})^2}{\tau_{k+1}^{(1,1)}\tau_{k}^{(1,1)}}$$
to the finite C-Toda lattice \eqref{eq:ctoda} 
without loss of generality. In fact, for some initial value problem of \eqref{eq:ctoda}, we guess one can determine uniquely the set $\{\zeta_p,c_p\}_{p=1}^n$, which will be addressed in the future publication.
\end{remark}

The $\tau_{k}^{(1,1)}$ can be viewed as the tau function of the finite C-Toda lattice. Due to its structure, it is reasonable to inquire the finite C-Toda lattice can be regarded as the isospectral deformation of Cauchy biorthogonal polynomials (CBOPs) or not. The answer is affirmative.

In fact, let us consider the $t$-dependent monic CBOPs $\Phi(x;t)=(\Phi_0(x;t),\cdots,\Phi_{n-1}(x;t))^\mathrm{T}$ satisfying the biorthogonality
$$\iint_{\mathbb{R_+}^2}\frac{\Phi_k(x;t)\Phi_l(y;t)}{x+y}\mu(x;t)\mu(y;t)dxdy=h_k(t)\delta_{k,l},
$$
where 
$$\mu(x;t)dx=e^{tx}\sum_{p=1}^nc_p(0)\xi_p\delta(x-\xi_p)dx.
$$
In this case, it is clear that the $t$-dependent bimoments $$J_{i,j}(t)=\iint_{\mathbb{R_+}^2}\frac{x^iy^j}{x+y}\mu(x;t)\mu(y;t)dxdy$$ admit the evolution relation:
$$\dot J_{i,j}(t)=J_{i+1,j}+J_{i,j+1}.$$
As is shown, the CBOPs satisfy a four-term recurrence \cite{bertola2010cauchy,Miki2011cauchy}:
\begin{align}\label{ctoda_lax1}
xL_1\Phi(x;t)=L_2\Phi(x;t)
\end{align}
%$$x(\Phi_k(x;t)+\mathcal{A}_{k}(t)\Phi_{k-1}(x;t)=\Phi_{k+1}(x;t)+\mathcal{B}_{k}(t)\Phi_{k}(x;t)+\mathcal{C}_{k}(t)\Phi_{k-1}(x;t)+\mathcal{D}_{k}(t)\Phi_{k-2}(x;t),$$
and one may obtain the evolution relation \cite{li2017cauchy}:
\begin{align}\label{ctoda_lax2}
L_1\dot \Phi(x;t) =B_2\Phi(x;t),
\end{align}
%$$
%\frac{\partial}{\partial t}(\Phi_k(x;t)+\mathcal{A}_{k}(t)\Phi_{k-1}(x;t))=2(\mathcal{B}_{k-1}-\mathcal{A}_{k-1})\mathcal{A}_{k}\Phi_{k-1}(x;t))
%$$
where 
\begin{eqnarray*}
&L_1=\left(\begin{array}{cccccc}
1&&\\
\mathcal{A}_1&1&\\
%&\mathcal{A}_2&1\\
%&&\ddots&1\\
&\ddots&\ddots\\
&&\mathcal{A}_{n-1}&1
\end{array}
\right),\quad L_2=\left(\begin{array}{cccccc}
\mathcal{B}_0&1&&&\\
\mathcal{C}_1&\mathcal{B}_1&1&&\\
\mathcal{D}_2&\mathcal{C}_2&\mathcal{B}_2&1\\
&\ddots&\ddots&\ddots&\ddots\\
&&\mathcal{D}_{n-2}&\mathcal{C}_{n-2}&\mathcal{B}_{n-2}&1\\
&&&\mathcal{D}_{n-1}&\mathcal{C}_{n-1}&\mathcal{B}_{n-1}
\end{array}
\right),\\
&
B_2=\left(\begin{array}{cccccc}
0&&&\\
2(\mathcal{B}_0-\mathcal{A}_0)\mathcal{A}_1&0&&\\
&\ddots&\ddots\\
&&2(\mathcal{B}_{n-2}-\mathcal{A}_{n-2})\mathcal{A}_{n-1}&0
\end{array}
\right),
\end{eqnarray*}
with $$\mathcal{A}_k=\sqrt{\frac{b_ku_k}{b_{k-1}}},\quad \mathcal{B}_k=\frac{1}{2}b_k+\sqrt{\frac{b_ku_k}{b_{k-1}}},\quad \mathcal{C}_k=-u_k-\frac{1}{2}\sqrt{b_ku_kb_{k-1}},\quad \mathcal{D}_k=-u_{k-1}\sqrt{\frac{b_ku_k}{b_{k-1}}}.$$
\eqref{ctoda_lax1} and \eqref{ctoda_lax2} constitute the Lax pair of the finite C-Toda lattice \eqref{eq:ctoda}, in other words, the finite C-Toda lattice admit the Lax representation:
$$\dot L=[B,L],$$
where 
$$L=L_1^{-1}L_2,\qquad B=L_1^{-1}B_2.$$

\begin{remark}
The appearance of the evolution relation \eqref{ctoda_lax2} is slightly different from that in \cite{li2017cauchy}. The formulae here are neat and helpful to calculate the compatibility condition. By a direct computation, one can see that the above Lax formula is exactly the finite C-Toda lattice \eqref{eq:ctoda}. 
\end{remark}

\begin{remark}
An semi-infinite C-Toda lattice can be easily constructed together with its infinite matrix Lax formulation. One investigation can be found in \cite{li2017cauchy}. %However, we do not look into it here.
\end{remark}

\subsection{A correspondence between DP peakon and finite C-Toda lattices}
We have derived the nonlinear finite C-Toda lattice \eqref{eq:ctoda} from a linearised flow in the positive direction step by step. Recall that the DP peakon ODEs \eqref{DP_eq:peakon} can be linearised to a flow in the negative direction by inverse spectral method.  In Appendix \ref{app:CHA} and  \ref{app:NVB}, the relevance of the nonlinear systems between the CH peakon and the finite A-Toda lattices, and the Novikov peakon and the finite B-Toda lattices are summarised, respectively.
Similar to those in Appendix \ref{app:CHA} and   \ref{app:NVB},  we shall give an interpretation on the relation between nonlinear variables of DP peakon ODEs and the finite C-Toda lattice, which is summarised as the following theorem. 

\begin{theorem}\label{th:dp_ctoda} Given 
$$
\beta_i(0)=\sum_{p=1}^n\xi_p(0)^ic_p(0),\qquad J_{i,j}(0)=J_{j,i}(0)=\sum_{p=1}^n\sum_{q=1}^n\frac{\xi_p(0)^i\xi_q(0)^j}{\xi_p(0)+\xi_q(0)}c_p(0)c_q(0),
$$
with
 \[
 0<\xi_1(0)<\xi_2(0)<\cdots<\xi_n(0),\qquad c_p(0)>0,
 \]
let $\tau_{k}^{(i,j)}(0) $ and $\sigma_k^{(i,j)}(0)$ be defined as\footnote{The positivity of such determinants for $1\leq k\leq n$ is guaranteed (see e.g. \cite{bertola2010cauchy,lundmark2005degasperis}).}

\begin{align*}
\tau_k^{(i,j)}(0)=\left|
\begin{array}{cccc}
J_{i,j}(0)&J_{i,j+1}(0)&\cdots & J_{i,j+k-1}(0)\\
J_{i+1,j}(0)&J_{i+1,j+1}(0)&\cdots&J_{i+1,j+k-1}(0)\\
\vdots&\vdots&\ddots&\vdots\\
J_{i+k-1,j}(0)&J_{i+k-1,j+1}(0)&\cdots&J_{i+k-1,j+k-1}(0)
\end{array}
\right|=\tau_k^{(j,i)}(0)
\end{align*}
for any nonnegative integer $k$, and $\tau_0^{(i,j)}(0)=1$ , $\tau_k^{(i,j)}(0)=0$ for $k<0$, and \begin{align}
\sigma_k^{(i,j)}(0)=\left|
\begin{array}{ccccc}
\beta_{i}(0)&J_{i,j}(0)&J_{i,j+1}(0)&\cdots & J_{i,j+k-2}(0)\\
\beta_{i+1}(0)&J_{i+1,j}(0)&J_{i+1,j+1}(0)&\cdots&J_{i+1,j+k-2}(0)\\
\vdots&\vdots&\vdots&\ddots&\vdots\\
\beta_{i+k-1}(0)&J_{i+k-1,j}(0)&J_{i+k-1,j+1}(0)&\cdots&J_{i+k-1,j+k-2}(0)
\end{array}
\right|
\end{align}
with the convention $\sigma_1^{(i,j)}(0)=\beta_{i}(0)$ and $\sigma_k^{(i,j)}(0)=0$ for $k<1$.
 \begin{enumerate}
 \item Introduce the variables $\{x_k(0),m_k(0)\}_{k=1}^n$ defined by
\begin{align*}
 x_{k'}(0)=\frac{1}{2}\log\frac{2\tau_k^{(0,1)}(0)}{\tau_{k-1}^{(1,2)}(0)},\qquad m_{k'}(0)=\frac{\tau_k^{(0,1)}(0)\tau_{k-1}^{(1,2)}(0)}{\tau_k^{(1,1)}(0)\tau_{k-1}^{(1,1)}(0)},
\end{align*}
where $k'=n+1-k.$ If $\{\xi_p(t),c_p(t)\}_{p=1}^n$ evolve as 
 \[
 \dot \xi_p=0,\qquad \dot c_p=\frac{c_p}{\xi_p},
 \]
then $\{x_k(t),m_k(t)\}_{k=1}^n$  satisfy the DP peakon ODEs \eqref{DP_eq:peakon}.
\item 
 Introduce the variables $\left\{\{u_k(0)\}_{k=1}^{n-1},\{b_{k}(0)\}_{k=0}^{n-1}\right\}$ defined by
$$u_k(0)=\frac{\tau_{k+1}^{(1,1)}(0)\tau_{k-1}^{(1,1)}(0)}{({\tau_{k}^{(1,1)}(0)})^2},\qquad b_k=\frac{(\sigma_{k+1}^{(1,1)}(0))^2}{\tau_{k+1}^{(1,1)}(0)\tau_{k}^{(1,1)}(0)}.$$
 If $\{\xi_p(t),c_p(t)\}_{p=1}^n$ evolve as 
 \[
 \dot \xi_p=0,\qquad \dot c_p=\xi_pc_p,
 \]
 then  $\left\{\{u_k(0)\}_{k=1}^{n-1},\{b_{k}(0)\}_{k=0}^{n-1}\right\}$ satisfy the finite C-Toda lattice \eqref{eq:ctoda} with $u_0=u_n=0$.
 \item There exists a mapping from $\{x_k(0),m_k(0)\}_{k=1}^n$ to $\left\{\{u_k(0)\}_{k=1}^{n-1},\{b_{k}(0)\}_{k=0}^{n-1}\right\}$ according to
\begin{align*}
 &u_k(0)=\frac{e^{2x_{k'-1}(0)-2x_{k'}(0)}}{m_{k'-1}(0)m_{k'}(0)\left(1-e^{x_{k'-1}(0)-x_{k'}(0)}\right)^4}, \\
 & b_k(0)=\frac{2(e^{x_{k'}(0)-x_{k'-1}(0)}-e^{x_{k'-2}(0)-x_{k'-1}(0)})^2}{m_{k'-1}(0)(1-e^{x_{k'-2}(0)-x_{k'-1}(0)})^2(e^{x_{k'}(0)-x_{k'-1}(0)}-1)^2}
\end{align*}
with the convention
\begin{align*}
 x_0(0)=-\infty,\quad x_{n+1}(0)=+\infty.
%u_0(0)=0, \ b_0(0)=\frac{2}{m_{n}(0)(1-e^{x_{n-1}(0)-x_{n}(0)})^2}, \ b_{n-1}(0)=\frac{2}{m_{1}(0)(1-e^{x_{1}(0)-x_{2}(0)})^2}.
\end{align*}
Actually, this item partially reveals the relevance of the corresponding spectral problems of the DP peakon and C-Toda lattices.
 \end{enumerate}
 \end{theorem}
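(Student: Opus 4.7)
The proof splits naturally into three independent verifications.

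For part (1), the defining formulas for $x_{k'}(0), m_{k'}(0)$ in terms of $\tau_k^{(i,j)}(0)$ coincide, via the symmetry $\tau_k^{(i,j)} = \tau_k^{(j,i)}$, with the expressions \eqref{sol:DP_FG} under the identification $F_k^{(i,j)} \leftrightarrow \tau_k^{(i,j)}$ and $(\zeta_p, a_p) \leftrightarrow (\xi_p, c_p)$. The hypothesis $\dot c_p = c_p/\xi_p$ produces $\dot\beta_i = \beta_{i-1}$ and $\dot J_{i,j} = \beta_{i-1}\beta_{j-1}$, matching the linearized flow \eqref{DP_evl:b}. Hence Lemma \ref{lem:der_F} and Corollary \ref{coro:idsum} apply without modification, and the alternative proof of Theorem \ref{th:DP_solution} given in Section \ref{sec:dp} goes through verbatim. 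Part (2) is a direct application of Theorem \ref{th:ctoda} to the positive-flow data with initial condition $c_p(0)>0$.

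For part (3), both sides at $t=0$ depend only on the common spectral data $\{\xi_p, c_p(0)\}$, so the statement is purely algebraic. The plan is to pass through auxiliary quantities $\mathcal{U}_k, \mathcal{V}_k, \mathcal{W}_k$ defined by the formulas \eqref{exp:uvw} with $(\xi_p, c_p(0))$ in place of $(\zeta_p, a_p)$. By Lemma \ref{lem:FGUVW} (in $\tau$-notation),
\[
\tau_k^{(1,0)} = \mathcal{U}_k^2/2^k,\qquad \tau_k^{(2,1)} = \mathcal{V}_k^2/2^k,\qquad \tau_k^{(1,1)} = \mathcal{W}_k/2^k,
\]
together with $\mathcal{W}_k = \mathcal{U}_k\mathcal{V}_k - \mathcal{U}_{k+1}\mathcal{V}_{k-1}$. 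Substituting into the formulas of part (1) yields
\[
e^{x_{k'}} = \mathcal{U}_k/\mathcal{V}_{k-1},\quad m_{k'} = \mathcal{U}_k^2\mathcal{V}_{k-1}^2/(\mathcal{W}_k\mathcal{W}_{k-1}),\quad 1 - e^{x_{k'-1}-x_{k'}} = \mathcal{W}_k/(\mathcal{U}_k\mathcal{V}_k).
\]
Plugging these into the claimed formula for $u_k(0)$ causes the right-hand side to collapse through a cascade of cancellations to $\mathcal{W}_{k+1}\mathcal{W}_{k-1}/\mathcal{W}_k^2 = \tau_{k+1}^{(1,1)}\tau_{k-1}^{(1,1)}/(\tau_k^{(1,1)})^2 = u_k$, establishing the first mapping identity. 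An analogous substitution reduces the claimed formula for $b_k(0)$ to $2(\mathcal{U}_k\mathcal{V}_{k+1} - \mathcal{U}_{k+2}\mathcal{V}_{k-1})^2/(\mathcal{W}_{k+1}\mathcal{W}_k)$. To match this with the definition $b_k = (\sigma_{k+1}^{(1,1)})^2/(\tau_{k+1}^{(1,1)}\tau_k^{(1,1)})$, one needs the $\sigma$-evaluation
\[
(\sigma_{k+1}^{(1,1)})^2 = (\mathcal{U}_k\mathcal{V}_{k+1} - \mathcal{U}_{k+2}\mathcal{V}_{k-1})^2/4^k,
\]
the analog for $\sigma_{k+1}^{(1,1)}$ of the $G_k^{(2,0)}$-formula in Lemma \ref{lem:FGUVW}.

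The main obstacle is this last identity; I would derive it by a matrix-factorization argument in the spirit of Lemma \ref{lem:FGE0}, or alternatively by applying the Jacobi identity \eqref{id:jacobi} to an augmented determinant and reducing via the bilinear identities in Lemma \ref{lem:bi_id_tau} combined with the known closed forms of $\tau_k^{(1,0)}, \tau_k^{(2,1)}, \tau_k^{(1,1)}$. Secondary care is required for the boundary indices $k' \in \{1, n\}$, where the conventions $x_0 = -\infty, x_{n+1} = +\infty$ are reconciled with $\mathcal{U}_{n+1} = \mathcal{V}_{-1} = 0$; the relevant terms in the mapping formulas then vanish and reproduce the required endpoints $u_0 = u_n = 0$ automatically.
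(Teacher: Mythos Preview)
Your overall architecture matches the paper exactly: parts (1) and (2) are handled by invoking Theorems \ref{th:DP_solution} and \ref{th:ctoda}, and part (3) is reduced to expressing everything in terms of $\mathcal{U}_k,\mathcal{V}_k,\mathcal{W}_k$ via Lemma \ref{lem:FGUVW}. You also isolate precisely the right missing ingredient, namely the evaluation
\[
\sigma_{k+1}^{(1,1)}=\pm\frac{\mathcal{U}_k\mathcal{V}_{k+1}-\mathcal{U}_{k+2}\mathcal{V}_{k-1}}{2^{k}},
\]
which is indeed the only nontrivial step left.

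Where you diverge from the paper is in the proposed derivation of this identity. The paper does \emph{not} use a matrix factorization (the factorizations in Lemma \ref{lem:FGE0} give rank bounds and vanishing, not closed-form evaluations, so that route would not obviously yield what you want), nor a direct Jacobi-identity argument. Instead it exploits the algebraic relation $J_{i+1,j}+J_{i,j+1}=\beta_i\beta_j$ to perform elementary row and column operations on the determinant defining $\sigma_k^{(1,1)}$, obtaining
\[
\sigma_k^{(1,1)}=(-1)^{k-1}
\begin{vmatrix}
\beta_1 & J_{2,0} & \cdots & J_{2,k-2}\\
\vdots & \vdots & & \vdots\\
\beta_k & J_{k+1,0} & \cdots & J_{k+1,k-2}
\end{vmatrix},
\]
which is exactly a $G_k^{(2,0)}$-type determinant. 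The last formula of Lemma \ref{lem:FGUVW} then gives the closed form immediately. This trick is short and self-contained; your Jacobi-identity alternative could presumably be made to work but would require assembling several of the bilinear relations in Lemma \ref{lem:bi} and is less direct. The boundary checks you mention for $k'\in\{1,n\}$ are handled in the paper exactly as you describe.
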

 \begin{proof}
The first two results immediately follow as they are the subjects of Theorem \ref{th:DP_solution} and Theorem \ref{th:ctoda}.

For the third one, we need two observations, one of which is
 $$\tau_k^{(1,1)}(0)=\sqrt{\tau_k^{(0,1)}(0)\tau_k^{(1,2)}(0)}-\sqrt{\tau_{k+1}^{(0,1)}(0)\tau_{k-1}^{(1,2)}(0)}$$ 
 obtained from the corresponding formulae for $\tau_k^{(i,j)}(0)$, $\sigma_k^{(i,j)}(0)$ in terms of $U_k,V_k$ formally in Lemma \ref{lem:FGUVW}. 
 The other is
   $$\sigma_k^{(1,1)}(0)=\sqrt{\tau_{k-1}^{(0,1)}(0)\tau_k^{(1,2)}(0)}-\sqrt{\tau_{k+1}^{(0,1)}(0)\tau_{k-2}^{(1,2)}(0)},$$ 
which can be derived by noticing 
  \begin{align}\label{sigma1120}
  \sigma_k^{(1,1)}(0)=(-1)^{k-1}\left|
\begin{array}{ccccc}
\beta_{1}(0)&J_{2,0}(0)&J_{2,1}(0)&\cdots & J_{2,k-2}(0)\\
\beta_{2}(0)&J_{3,0}(0)&J_{3,1}(0)&\cdots&J_{3,k-2}(0)\\
\vdots&\vdots&\vdots&\ddots&\vdots\\
\beta_{k}(0)&J_{k+1,0}(0)&J_{k+1,1}(0)&\cdots&J_{k+1,k-2}(0)
\end{array}
\right|,
\end{align}
and the corresponding formulae  in Lemma \ref{lem:FGUVW}. Here \eqref{sigma1120} is a consequence of row and column operations by using the fact
$$J_{i+1,j}+J_{i,j+1}=\beta_{i}\beta_{j}.$$ 
Then the proof for the third one can be completed by direct calculations.
\begin{align*}
\end{align*}
\end{proof}

\section{Relation between C-Toda lattice and discrete CKP equation}\label{sec:schief}
In this section, we explain how to connect the C-Toda lattice with the discrete CKP equation \cite{bobenko2015discrete,bobenko2017discrete,fu17,schief2003lattice}. This provides a reasonable argument to the origin of ``the Toda lattice of CKP type''. 

The discrete CKP equation 
\begin{align}\label{eq:dckp}
&\left(\hat\tau_{i,j,l}\hat\tau_{i+1,j+1,l+1}+\hat\tau_{i+1,j,l}\hat\tau_{i,j+1,l+1}-\hat\tau_{i,j+1,l}\hat\tau_{i+1,j,l+1}-\hat\tau_{i,j,l+1}\hat\tau_{i+1,j+1,l}\right)^2\nonumber\\
=&4\left(\hat\tau_{i+1,j+1,l}\hat\tau_{i+1,j,l+1}-\hat\tau_{i+1,j,l}\hat\tau_{i+1,j+1,l+1}\right)\left(\hat\tau_{i,j+1,l}\hat\tau_{i,j,l+1}-\hat\tau_{i,j,l}\hat\tau_{i,j+1,l+1}\right)
\end{align}
was derived by Schief as an integrable system from a geometric constraint of the discrete Darboux system associated with a theorem due to Carnot \cite{schief2003lattice}.  It has also been obtained by Kashaev \cite{kashaev1996discrete} in the context of star triangle moves in the Ising model. It can also be regarded as a reduction of the hexahedron recurrence in connection with cluster algebras and dimer configurations \cite{kenyon2016double}.  Furthermore, it is shown that the discrete CKP equation admits a tau-function generated by a symmetric $M$-system, and can be interpreted as a local relation between the principal minors of the $M$-system \cite{bobenko2015discrete,bobenko2017discrete,holtz2007hyper}. All these facts show that the discrete CKP equation possesses interesting algebraic and geometric properties.

Under the coordinate transformation
$$
\hat\tau_{i,j,l}=\tilde\tau_{k,m,n},\ \qquad\qquad k=i-j,\ m=l, \ n=j,
$$
the discrete CKP equation \eqref{eq:dckp} can be rewritten as 
\begin{align}\label{eq:dckp2}
&\left(\tilde\tau_{k,m,n}\tilde\tau_{k,m+1,n+1}+\tilde\tau_{k+1,m,n}\tilde\tau_{k-1,m+1,n+1}-\tilde\tau_{k+1,m+1,n}\tilde\tau_{k-1,m,n+1}-\tilde\tau_{k,m,n+1}\tilde\tau_{k,m+1,n}\right)^2\nonumber\\
=&4\left(\tilde\tau_{k+1,m+1,n}\tilde\tau_{k,m,n+1}-\tilde\tau_{k+1,m,n}\tilde\tau_{k,m+1,n+1}\right)\left(\tilde\tau_{k,m+1,n}\tilde\tau_{k-1,m,n+1}-\tilde\tau_{k,m,n}\tilde\tau_{k-1,m+1,n+1}\right)
\end{align}
Then applying a 1+1 dimensional reduction with the index constraint $\tilde\tau_{k,m,n}\mapsto\tilde\tau_{k,m+n}$, we have 
\begin{align}\label{eq:1+1dckp}
&\left(\tilde\tau_{k,m}\tilde\tau_{k,m+2}+\tilde\tau_{k+1,m}\tilde\tau_{k-1,m+2}-\tilde\tau_{k+1,m+1}\tilde\tau_{k-1,m+1}-\tilde\tau_{k,m+1}\tilde\tau_{k,m+1}\right)^2\nonumber\\
=&4\left(\tilde\tau_{k+1,m+1}\tilde\tau_{k,m+1}-\tilde\tau_{k+1,m}\tilde\tau_{k,m+2}\right)\left(\tilde\tau_{k,m+1}\tilde\tau_{k-1,m+1}-\tilde\tau_{k,m}\tilde\tau_{k-1,m+2}\right).
\end{align}
which we refer to as a 1+1 dimensional reduction of the discrete CKP equation.

Now we are ready to state the relation between the C-Toda lattice and the discrete CKP equation, which is summarised in the following theorem.
\begin{theorem}
As for the 1+1 dimensional reduction of the discrete CKP equation \eqref{eq:1+1dckp}, under the change of variables
 $$\tau_k(t)=\epsilon^{-k^2}\tilde\tau_{k,m},\qquad t=m\epsilon,$$
one can arrive at the C-Toda lattice of quartic form \eqref{eq:ctoda_quartic} as $m\rightarrow\infty,\epsilon\rightarrow 0$.
\end{theorem}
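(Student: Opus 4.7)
The plan is a straightforward continuum-limit computation, carried out by Taylor expansion after the prescribed rescaling, with the $\epsilon^{-k^2}$ factor playing the crucial role of balancing powers of $\epsilon$ on both sides of \eqref{eq:1+1dckp}.

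First, I would set $\tilde\tau_{k,m} = \epsilon^{k^2}\tau_k(t)$ with $t = m\epsilon$, and replace each shift $m \mapsto m+1$, $m \mapsto m+2$ by $\tau_k(t+\epsilon)$, $\tau_k(t+2\epsilon)$, expanded to second order:
\begin{align*}
\tau_k(t+\epsilon) &= \tau_k + \epsilon\dot\tau_k + \tfrac{\epsilon^2}{2}\ddot\tau_k + O(\epsilon^3),\\
\tau_k(t+2\epsilon) &= \tau_k + 2\epsilon\dot\tau_k + 2\epsilon^2\ddot\tau_k + O(\epsilon^3).
\end{align*}
A direct computation then shows that the four terms inside the square on the left of \eqref{eq:1+1dckp} split into two pairs: the pair with indices $(k,k)$ carries the prefactor $\epsilon^{2k^2}$, while the pair with indices $(k+1,k-1)$ carries $\epsilon^{(k+1)^2+(k-1)^2} = \epsilon^{2k^2+2}$.

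Next I would expand each pair. For the $(k,k)$ pair the zeroth- and first-order terms in $\epsilon$ cancel because $\tau_k(t)\tau_k(t+2\epsilon) - \tau_k(t+\epsilon)^2 = \epsilon^2(\tau_k\ddot\tau_k - \dot\tau_k^2) + O(\epsilon^3)$. For the $(k+1,k-1)$ pair only the constant term cancels, giving $\epsilon\,(\tau_{k+1}\dot\tau_{k-1} - \dot\tau_{k+1}\tau_{k-1}) + O(\epsilon^2)$, so after multiplying by the $\epsilon^2$ prefactor difference this pair is of order $\epsilon^{2k^2+3}$ and is therefore subleading to the $(k,k)$ pair, which is of order $\epsilon^{2k^2+2}$. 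Squaring yields
\[
\mathrm{LHS} = \epsilon^{4k^2+4}\bigl(\tau_k\ddot\tau_k - \dot\tau_k^2\bigr)^2 + O(\epsilon^{4k^2+5}).
\]

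The same procedure applied to each factor on the right of \eqref{eq:1+1dckp} gives, after the prefactors $\epsilon^{(k+1)^2+k^2}=\epsilon^{2k^2+2k+1}$ and $\epsilon^{k^2+(k-1)^2}=\epsilon^{2k^2-2k+1}$ are extracted, a cancellation of the zeroth-order contributions so that each factor becomes $\epsilon\,(\dot\tau_{k+1}\tau_k - \tau_{k+1}\dot\tau_k)$ and $\epsilon\,(\dot\tau_k\tau_{k-1} - \tau_k\dot\tau_{k-1})$ at leading order, respectively. The total $\epsilon$-exponent on the right is $(2k^2+2k+2)+(2k^2-2k+2)=4k^2+4$, in perfect agreement with the left, so dividing by $\epsilon^{4k^2+4}$ and sending $\epsilon \to 0$ yields
\[
\bigl(\tau_k\ddot\tau_k - \dot\tau_k^2\bigr)^2 = 4\bigl(\dot\tau_{k+1}\tau_k - \tau_{k+1}\dot\tau_k\bigr)\bigl(\dot\tau_k\tau_{k-1} - \tau_k\dot\tau_{k-1}\bigr),
\]
which is precisely \eqref{eq:ctoda_quartic}.

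The only delicate point, and hence the main bookkeeping obstacle, is verifying that the $\epsilon$-powers on both sides coincide at leading order: this is exactly what forces the choice of the $\epsilon^{-k^2}$ normalization, since $(k+1)^2+(k-1)^2-2k^2 = 2$ generates the $\epsilon^2$ that compensates the quadratic vanishing $\tau_k\ddot\tau_k-\dot\tau_k^2$ coming from the $(k,k)$ pair on the left, while the same identity makes the two rectangular products on the right share a common exponent $4k^2+4$. Once this exponent accounting is confirmed and the Taylor expansions above are substituted, the reduction to \eqref{eq:ctoda_quartic} follows by direct inspection of the coefficients.
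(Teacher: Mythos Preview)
Your proposal is correct and follows essentially the same approach as the paper's own proof: substitute $\tilde\tau_{k,m}=\epsilon^{k^2}\tau_k(t)$, Taylor-expand the shifted arguments, and let $\epsilon\to 0$. In fact, your explicit tracking of the $\epsilon$-exponents (showing that both sides scale as $\epsilon^{4k^2+4}$ and explaining why the $\epsilon^{-k^2}$ normalization is forced) is more detailed than the paper's proof, which simply asserts that the limit produces \eqref{eq:ctoda_quartic} without spelling out the power-counting.
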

\begin{proof}
Under the variable transformation, we have
$$
\tilde\tau_{k,m+1}=\epsilon^{k^2}\tau_k(t+\epsilon), \qquad \tilde\tau_{k,m-1}=\epsilon^{k^2}\tau_k(t-\epsilon),\qquad \tilde\tau_{k,m+2}=\epsilon^{k^2}\tau_k(t+2\epsilon).
$$
Substituting Taylor's expansion 
\begin{align*}
&\tilde\tau_{k,m+1}=\epsilon^{k^2}(\tau_k(t)+\epsilon\dot\tau_k(t)+\frac{1}{2}\epsilon^2\ddot\tau_k(t)+\cdots),\\
&\tilde\tau_{k,m-1}=\epsilon^{k^2}(\tau_k(t)-\epsilon\dot\tau_k(t)+\frac{1}{2}\epsilon^2\ddot\tau_k(t)+\cdots),\\
&\tilde\tau_{k,m+2}=\epsilon^{k^2}(\tau_k(t)+2\epsilon\dot\tau_k(t)+2\epsilon^2\ddot\tau_k(t)+\cdots)
\end{align*}
into \eqref{eq:1+1dckp} and taking the limit  $\epsilon\rightarrow 0$, one can obtain the C-Toda lattice of quartic form \eqref{eq:ctoda_quartic}. 
\end{proof}
\begin{remark}
We have already expressed the tau function of the finite C-Toda lattice in terms of Cauchy bimoment determinants in Section \ref{sec:ctoda}, and associated it with Cauchy biorthogonal polynomials. Actually we can also obtain some parallel results for the fully discrete CKP equation \eqref{eq:dckp2} and the 1+1 dimensional reduction \eqref{eq:1+1dckp}. However, this will be the subject of a future publication.
\end{remark}

%\section{Long time asymptotic analysis for the }

\section{Conclusion and discussion}\label{sec:con}
In this paper, we derived a finite C-Toda lattice together with Lax representation and showed that it can be regarded as a nonlinear opposite flow to the DP peakon lattice. So far, it has turned out that the CH, Novikov, DP peakon lattices are related to A,B,C-Toda lattices, respectively. 

%The method we employed is primarily based on the idea of Hirota's bilinear method \cite{hirota2004direct}.   

%Although the method is algebraic, it is efficient to obtain some interesting results, which may be not  easily derived in other ways.

It is known that, the CH, Novikov and DP equations can, via hodograph transformations, be connected with the first negative flows of KdV, SK and KK hierarchies, respectively \cite{degasperis2002new,fuchssteiner1996some,hone2008integrable}. And, the KdV, SK and KK hierarchies are in the frame of AKP, BKP, CKP hierarchies, which are classified according to the associated Lie algebra of the respective tau function structures \cite{jimbo1983solitons}. The CH peakon and A-Toda, Novikov peakon and B-Toda, DP peakon and C-Toda lattices share the same tau function structures, and  form opposite nonlinear flows, respectively. To summarise, the relations may be charactered by the following table.

\begin{table}[htbp]
\caption{A comparison between CH, Novikov, DP peakon and A,B,C-Toda lattices }
\begin{tabular}{|c|c|c|c|c|c|}
\hline
\multicolumn{2}{|c|}{\multirow{2}{*}{Hierarchy}}&\multirow{2}{*}{$\tau$-structure}&\multirow{2}{*}{Moments $c_{i,j}$}  &
\multicolumn{2}{c|}{ $t$-deformation flow}\\
\cline{5-6}
\multicolumn{2}{|c|}{}&&&
$\mu (x;t)=e^{tx}\mu(x;0)$&$\mu (x;t)=e^{\frac{t}{x}}\mu(x;0)$\\ \hline
KP&KdV&$\det((c_{i,j}))$&$\int x^{i+j} \mu (x)dx $&A-Toda& CH peakon \\
BKP&SK&$Pf((c_{i,j}))$&$\iint \frac{x-y}{x+y}x^iy^j \mu (x)\mu (y)dxdy$&B-Toda&Novikov peakon\\
CKP&KK&$\det((c_{i,j}))$&$\iint \frac{x^iy^j}{x+y}\mu (x)\mu (y) dxdy$&C-Toda &DP peakon\\
\hline
\end{tabular}
\label{comp_abc}
\end{table}

 It is well known that the A-Toda lattice has been extensively studied in the literature. The B-Toda lattice, especially the bilinear form, has also been discussed in some papers, e.g. \cite{chang2017partial,hirota2001soliton,hu2017partition}. The C-Toda lattice derived in the present paper looks novel and concise and remains further investigation. 
 
 %A rigorous spectral and inverse spectral analysis for the finite C-Toda lattice will be addressed in the future publication.

\section{Acknowledgements}
We thank anonymous referees for many helpful suggestions. This work was supported in part by the National Natural Science Foundation of China. X.C. was supported under the grant nos 11701550, 11731014. X.H. was supported under the grant nos 11331008, 11571358.

%\textcolor{blue}{
%Data accessibility. This work does not involve any experimental data.}
%
%\textcolor{blue}{Authors’ contributions. The authors are all responsible for the work presented.}
%
%\textcolor{blue}{
%Competing interests. We declare we have no competing interests.}
%
%\textcolor{blue}{
%Funding. This work was supported in part by the National Natural Science Foundation of China. X.C. was supported under the grant nos 11701550, 11731014. X.H. was supported under the grant nos 11331008, 11571358.
%}

\begin{appendix}
\section{CH peakon and finite A-Toda lattices}\label{app:CHA}

The finite A-Toda lattice reads
\begin{subequations}
\label{eq:atoda}
\begin{align}
&\dot u_{k}=u_{k}(b_{k}-b_{k-1}),  \quad \ k=1,\dots,n-1,\\
&\dot b_k=u_{k+1}-u_k, \qquad\quad k=0,\dots,n-1,
\end{align}
\end{subequations}
with the boundary condition $u_0=u_n=0.$ It can be equivalently written as the Lax equation
$$
\dot L=[L,B]
$$
with
\begin{align*}
L=\left(
\begin{array}{ccccc}
b_0&1\\
u_1&b_1&1\\
&\ddots&\ddots&\ddots\\
&&u_{n-2}&b_{n-2}&1\\
&&&u_{n-1}&b_{n-1}
\end{array}
\right),\qquad B=\left(
\begin{array}{ccccc}
0&0\\
u_1&0&0\\
&\ddots&\ddots&\ddots\\
&&u_{n-2}&0&0\\
&&&u_{n-1}&0
\end{array}
\right).
\end{align*} 
It is not hard to see that the ordinary orthogonal polynomials appear as wave functions of the Lax pair of the finite A-Toda lattice. Indeed,
 the finite A-Toda lattice can arise as a one-parameter deformation of the measure of the ordinary orthogonal polynomials.

The relation between the CH peakon and the finite A-Toda lattices was revealed in \cite{beals2001peakons}, where a mapping is established via the corresponding spectral problems. In this appendix, in order to put it in a unified frame, we give a slightly different connection from that in \cite{beals2001peakons}. 
\begin{theorem}\label{th:ch_atoda} Given 
$$
A_i(0)=\sum_{p=1}^n\xi_p^i(0)c_p(0),\qquad \text{with}\qquad  0<\xi_1(0)<\xi_2(0)<\cdots<\xi_n(0),\qquad c_p(0)>0,
$$
let $\tau_k^{(l)}(0)$ be defined as Hankel determinants\footnote{The positivity for such determinants with $1\leq k\leq n$ is guaranteed (see e.g. \cite{beals2000multipeakons,chang2014generalized}).}
\begin{align*}
\tau_k^{(l)}(0)=\left|
\begin{array}{cccc}
A_l(0)&A_{l+1}(0)&\cdots &A_{l+k-1}(0)\\
A_{l+1}(0)&A_{l+2}(0)&\cdots &A_{l+k}(0)\\
\vdots&\vdots&\ddots&\vdots\\
A_{l+k-1}(0)&A_{l+k}(0)&\cdots &A_{l+2k-2}(0)\\
\end{array}
\right|
\end{align*}
for any nonnegative integer $k$, and $\tau_0^{(l)}(0)=1$ , $\tau_k^{(l)}(0)=0$ for $k<0$.
 \begin{enumerate}
 \item Introduce the variables $\{x_k(0),m_k(0)\}_{k=1}^n$ defined by
\begin{align*}
 x_{k'}(0)=\ln \left(\frac{2\tau_k^{(0)}(0)}{\tau_{k-1}^{(2)}(0)}\right),\qquad m_{k'}(0)=\frac{\tau_k^{(0)}(0)\tau_{k-1}^{(2)}(0)}{\tau_k^{(1)}(0)\tau_{k-1}^{(1)}(0)},
\end{align*}
where $k'=n+1-k.$ If $\{\xi_p(t),c_p(t)\}_{p=1}^n$ evolve as 
 \[
 \dot \xi_p=0,\qquad \dot c_p=\frac{c_p}{\xi_p},
 \]
then $\{x_k(t),m_k(t)\}_{k=1}^n$  satisfy the CH peakon ODEs:
\begin{align*}
\dot x_k=\sum_{j=1}^nm_je^{-|x_j-x_k|},\qquad \dot m_k=\sum_{j=1}^n\sgn(x_k-x_j)m_je^{-|x_j-x_k|}.
\end{align*}
\item 
 Introduce the variables $\left\{\{u_k(0)\}_{k=1}^{n-1},\{b_{k}(0)\}_{k=0}^{n-1}\right\}$ defined by
$$u_k(0)=\frac{\tau_{k+1}^{(1)}(0)\tau_{k-1}^{(1)}(0)}{({\tau_{k}^{(1)}(0)})^2},\qquad b_k=\frac{\tau_{k+1}^{(1)}(0)\tau_{k-1}^{(2)}(0)}{\tau_k^{(2)}(0)\tau_{k}^{(1)}(0)}+\frac{\tau_{k+1}^{(2)}(0)\tau_{k}^{(1)}(0)}{\tau_k^{(2)}(0)\tau_{k+1}^{(1)}(0)}.$$
 If $\{\xi_p(t),c_p(t)\}_{p=1}^n$ evolve as 
 \[
 \dot \xi_p=0,\qquad \dot c_p=\xi_pc_p,
 \]
 then $\left\{\{u_k(0)\}_{k=1}^{n-1},\{b_{k}(0)\}_{k=0}^{n-1}\right\}$  satisfy the finite A-Toda lattice \eqref{eq:atoda} with $u_0=u_n=0$.
 \item There exists a mapping from $\{x_k(0),m_k(0)\}_{k=1}^n$ to $\left\{\{u_k(0)\}_{k=1}^{n-1},\{b_{k}(0)\}_{k=0}^{n-1}\right\}$ according to
\begin{align*}
 &u_k(0)=\frac{e^{x_{k'-1}(0)-x_{k'}(0)}}{m_{k'-1}(0)m_{k'}(0)\left(1-e^{x_{k'-1}(0)-x_{k'}(0)}\right)^2}, \\
 & b_k(0)=\frac{1}{m_{k'-1}(0)(1+e^{-x_{k'-1}(0)})}\left(\frac{1+e^{-x_{k'}(0)}}{1-e^{x_{k'-1}(0)-x_{k'}(0)}}+\frac{1+e^{-x_{k'-1}(0)}}{1-e^{x_{k'-2}(0)-x_{k'-1}(0)}}\right)
 \end{align*}
with the convention
\begin{align*}
x_0(0)=-\infty,\quad x_{n+1}(0)=+\infty.
%u_0(0)=0, \quad b_0(0)=\lim_{x_0\rightarrow -\infty}b_0(0), \quad b_{n-1}(0)=\lim_{x_{n+1}\rightarrow +\infty}b_{n-1}(0).
%u_0(0)=0, \ b_0(0)=\frac{2}{m_{n}(0)(1-e^{x_{n-1}(0)-x_{n}(0)})^2}, \ b_{n-1}(0)=\frac{2}{m_{1}(0)(1-e^{x_{1}(0)-x_{2}(0)})^2}.
\end{align*}
Actually, this item partially reveals the relevance of the corresponding spectral problems of the CH peakon and A-Toda lattices.
 \end{enumerate}
 \end{theorem}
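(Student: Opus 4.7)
The strategy is to handle the three items in order, exploiting the fact that the same family of Hankel tau functions $\tau_k^{(l)}$ built from the single moments $A_i$ describes both flows: the CH peakon flow under $\dot c_p = c_p/\xi_p$, which yields the ``negative'' moment recursion $\dot A_i = A_{i-1}$, and the A-Toda flow under $\dot c_p = \xi_p c_p$, which yields the ``positive'' recursion $\dot A_i = A_{i+1}$. Items (1) and (2) are therefore two parallel instances of the classical Hankel-determinant formalism, running in direct analogy with the alternative proof of Theorem \ref{th:DP_solution} (cf.\ Lemma \ref{lem:der_F} and Corollary \ref{coro:idsum} for the Cauchy-kernel case). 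Item (3) is a purely algebraic identification between the two resulting nonlinear parametrizations of the same manifold of Hankel tau-data, and will follow from routine substitution once items (1) and (2) are in place.

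For item (1), the moment relation $\dot A_i = A_{i-1}$ combined with expansion of $\tau_k^{(l)}$ along its columns gives a clean formula for $\dot \tau_k^{(l)}$ as a single shifted Hankel-type determinant, mirroring Lemma \ref{lem:der_F}. Substituting the ansatz $x_{k'} = \log\bigl(2\tau_k^{(0)}/\tau_{k-1}^{(2)}\bigr)$ and $m_{k'} = \tau_k^{(0)}\tau_{k-1}^{(2)}/(\tau_k^{(1)}\tau_{k-1}^{(1)})$ into the CH peakon ODEs and telescoping the geometric sums $\sum_{j}m_{j'}e^{\pm(x_{j'}-x_{k'})}$ by the Hankel analogue of Corollary \ref{coro:idsum}, the verification reduces to a short list of bilinear identities, each of which is a special case of the Hankel Jacobi identity $\tau_{k+1}^{(l)}\tau_{k-1}^{(l+2)} = \tau_k^{(l)}\tau_k^{(l+2)} - (\tau_k^{(l+1)})^2$. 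For item (2), the opposite moment recursion $\dot A_i = A_{i+1}$ gives $\dot \tau_k^{(l)} = \tau_k^{(l+1)}$ by the same column expansion. Combining this with the Hankel Jacobi identity produces the standard bilinear Hirota form of the A-Toda hierarchy; translating into the variables $u_k, b_k$ of the statement then yields \eqref{eq:atoda}, with the boundary condition $u_n = 0$ following from the rank-$n$ vanishing $\tau_{n+1}^{(l)} = 0$ (the Hankel analogue of Lemma \ref{lem:FGE0}).

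For item (3), the plan is to substitute the determinantal expressions for $x_{k'}, m_{k'}$ directly into the proposed rational formulas for $u_k, b_k$ and check that they reduce to the determinantal definitions used in item (2). The basic inputs are $e^{x_{k'}} = 2\tau_k^{(0)}/\tau_{k-1}^{(2)}$, the companion identity for $e^{-x_{k'}}$, and ratios such as $e^{x_{k'-1}-x_{k'}} = (\tau_{k-1}^{(0)}\tau_{k-1}^{(2)})/(\tau_k^{(0)}\tau_{k-2}^{(2)})$, together with Jacobi-type rewritings of the combinations $1 - e^{x_{k'-1}-x_{k'}}$ and $1 + e^{-x_{k'}}$ in terms of $\tau_k^{(1)}$-type determinants. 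With these in hand, the formula for $u_k$ is a one-line computation. The main obstacle is the formula for $b_k$: each of its two fractions requires such a Jacobi reduction, and recognizing the sum as $\tau_{k+1}^{(1)}\tau_{k-1}^{(2)}/(\tau_k^{(2)}\tau_k^{(1)}) + \tau_{k+1}^{(2)}\tau_k^{(1)}/(\tau_k^{(2)}\tau_{k+1}^{(1)})$ demands a careful reorganization in which two seemingly distinct Jacobi reductions must be matched against the two Hankel terms. The boundary indices $k=1$ and $k=n$ also demand care, since the conventions $x_0 = -\infty$ and $x_{n+1} = +\infty$ correspond on the tau-function side to $\tau_{-1}^{(l)} = 0$ and $\tau_{n+1}^{(l)} = 0$; both limits must be taken consistently so that one term of $b_k$ drops out at each endpoint and the formulas match the boundary behavior of \eqref{eq:atoda}.
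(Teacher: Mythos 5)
Your overall architecture is sound, but be aware that the paper itself does far less than you propose: its proof of this theorem consists of citing Beals--Sattinger--Szmigielski for items (1) and (2) (the explicit CH multipeakon formulae and the explicit solution of the finite Toda lattice, respectively), and, for item (3), a direct substitution using the single Jacobi identity $\tau_{k+1}^{(0)}\tau_{k-1}^{(2)}=\tau_{k}^{(0)}\tau_{k}^{(2)}-\bigl(\tau_{k}^{(1)}\bigr)^2$ --- exactly the identity you name. So your plan to rederive items (1) and (2) from scratch by transplanting the Cauchy-kernel machinery (Lemma \ref{lem:der_F}, Corollary \ref{coro:idsum}, Lemma \ref{lem:bi}) to the Hankel setting is a genuinely different, self-contained route; it buys independence from the cited literature at the cost of redoing computations that are classical, and it is consistent in spirit with how the body of the paper treats the DP/C-Toda case.

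One step in your item (2) is wrong as stated: under $\dot A_i = A_{i+1}$ the column expansion does \emph{not} give $\dot\tau_k^{(l)}=\tau_k^{(l+1)}$. Differentiating column $j$ reproduces column $j+1$, so only the last column survives, and $\dot\tau_k^{(l)}$ is the $k\times k$ determinant with column indices $l,l+1,\dots,l+k-2,l+k$ --- the Hankel analogue of $\omega_k^{(i,j)}$ from Section \ref{sec:ctoda}, not the uniformly shifted $\tau_k^{(l+1)}$. Already for $k=2$ one has $\dot\tau_2^{(l)}=A_lA_{l+3}-A_{l+1}A_{l+2}$, whereas $\tau_2^{(l+1)}=A_{l+1}A_{l+3}-A_{l+2}^2$. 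The bilinear Toda equations do follow, but from this bumped-column determinant combined with the appropriate Jacobi identities, so that piece of your argument must be rewritten. A smaller slip in item (3): since $k'=n+1-k$ reverses the ordering, $x_{k'-1}$ corresponds to the Hankel index $k+1$, so $e^{x_{k'-1}-x_{k'}}=\tau_{k+1}^{(0)}\tau_{k-1}^{(2)}/\bigl(\tau_k^{(0)}\tau_k^{(2)}\bigr)$ rather than the ratio you wrote; with the direction corrected, the displayed Jacobi identity gives $1-e^{x_{k'-1}-x_{k'}}=\bigl(\tau_k^{(1)}\bigr)^2/\bigl(\tau_k^{(0)}\tau_k^{(2)}\bigr)$, after which the $u_k$ formula drops out in one line --- which is precisely the paper's verification of item (3).
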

\begin{proof}
The first item is the main content of the explicit construction of CH peakon ODEs in \cite{beals2000multipeakons} with certain scalling. The second one is on the explicit formulae of the finite A-Toda lattice \cite{beals2001peakons}. The third one can be directly confirmed by using an equality
$$\tau_{k+1}^{(0)}\tau_{k-1}^{(2)}=\tau_{k}^{(0)}\tau_{k}^{(2)}-\left(\tau_{k}^{(1)}\right)^2$$
obtained by the Jacobi's identity \eqref{id:jacobi}.
\end{proof}

It follows from the above theorem that A-Toda and CH peakon lattices can be seen as opposite nonlinear flows to each other.

\section{Novikov peakon and finite B-Toda lattices}\label{app:NVB}
The correspondence of the Novikov peakon and finite B-Toda lattices was investigated in the recent work \cite{chang2017application}. This appendix is devoted to listing the main bridge between them with a slight modification.

The finite B-Toda lattice reads
\begin{subequations}
\label{eq:btoda}
\begin{align}
&\dot u_{k}=u_{k}(b_{k}-b_{k-1}),  \qquad\qquad\qquad\qquad\ \  \ k=1,\dots,n-1,\\
&\dot b_k=u_{k+1}(b_{k+1}+b_k)-u_{k}(b_{k}+b_{k-1}), \quad k=0,\dots,n-1,
\end{align}
\end{subequations}
with the boundary condition $u_0=u_n=0.$  The Lax representation of the finite B-Toda lattice was constructed in \cite{chang2017partial}. In fact, an appropriate deformation of the so-called partial-skew orthogonal polynomials (PSOPs) may lead to the finite B-Toda lattice, that is, the finite B-Toda lattice can be equivalently written as
$$\dot L=[B,L],$$
where 
\begin{eqnarray*}
&L=L_1^{-1}L_2,\qquad B=L_1^{-1}B_2,\\
&L_1=\left(\begin{array}{cccccc}
1&&\\
u_1&1&\\
%&\mathcal{A}_2&1\\
%&&\ddots&1\\
&\ddots&\ddots\\
&&u_{n-1}&1
\end{array}
\right),\quad
B_2=\left(\begin{array}{cccccc}
0&&&\\
\mathcal{A}_1&0&&\\
&\ddots&\ddots\\
&&\mathcal{A}_{n-1}&0
\end{array}
\right),\\
&L_2=\left(\begin{array}{cccccc}
\mathcal{B}_0&1&&&\\
\mathcal{C}_1&\mathcal{B}_1&1&&\\
\mathcal{D}_2&\mathcal{C}_2&\mathcal{B}_2&1\\
&\ddots&\ddots&\ddots&\ddots\\
&&\mathcal{D}_{n-2}&\mathcal{C}_{n-2}&\mathcal{B}_{n-2}&1\\
&&&\mathcal{D}_{n-1}&\mathcal{C}_{n-1}&\mathcal{B}_{n-1}
\end{array}
\right)
\end{eqnarray*}
with $$\mathcal{A}_k=u_k(b_k+b_{k-1}),\quad \mathcal{B}_k=u_k+b_{k},\quad \mathcal{C}_k=-u_k(b_k+u_{k+1}),\quad \mathcal{D}_k=-(u_k)^2u_{k-1}.$$

The following conclusion is a consequence of combining Th. 3.7, 3.9 and 3.10 in \cite{chang2017application},  which implies that it is reasonable to view the B-Toda lattice and Novikov peakon lattice as opposite flows. Just note that here we establish a straightforward connection between the B-Toda lattice (instead of the modified B-Toda lattice) and Novikov peakon lattice. Actually it is not hard to confirm that there exists a Miura transformation from the modified B-Toda lattice to the B-Toda lattice. In order to give a unified picture in the present paper, it is convenient and sufficient for us to make use of the B-Toda lattice.
\begin{theorem}\label{th:nv_btoda}
Given Pfaffian entries
$$
Pf(i,j)|_{{t=0}}=\sum_{p=1}^n\sum_{q=1}^n
\frac{\zeta_p(0)-\zeta_q(0)}{\zeta_p(0)+\zeta_q(0)}(\zeta_p(0))^i(\zeta_q(0))^jc_p(0)c_q(0),\qquad Pf(d_0,i)|_{{t=0}}=\sum_{p=1}^nc_p(0)(\zeta_p(0))^i$$
with
 \[
 0<\zeta_1(0)<\zeta_2(0)<\cdots<\zeta_n(0),\qquad c_p(0)>0,
 \]
let $\tau_k^{(l)}(0)$ be defined as Pfaffians\footnote{The positivity of $\tau_k^{(l)}$ and $W_k^{(l)}$ for $1\leq k\leq n$ is ensured (see e.g. \cite{chang2017application,hone2009explicit}).
}
$$\tau_{2k}^{(l)}(0)=Pf(l,l+1,\cdots,l+2k-1)|_{{t=0}},\quad \tau_{2k-1}^{(l)}(0)=Pf(d_0,l,l+1,\cdots,l+2k-2)|_{{t=0}}$$
for any nonnegative integers $k$, and $\tau_0^{(l)}(0)=1$ , $\tau_k^{(l)}(0)=0$ for $k<0$.
Define
$$W_k^{(l)}(0)=\tau_k^{(l+1)}(0)\tau_k^{(l)}(0)-\tau_{k-1}^{(l+1)}(0)\tau_{k+1}^{(l)}(0).$$
 \begin{enumerate}
 \item Introduce the variables $\{x_k(0),m_k(0)\}_{k=1}^n$ defined by
$$ x_{k'}(0)=\frac{1}{2}\ln\frac{W_k^{(-1)}(0)}{W_{k-1}^{(0)}(0)},\qquad m_{k'}(0)=\frac{\sqrt{W_k^{(-1)}(0)W_{k-1}^{(0)}(0)}}{\tau_k^{(0)}(0)\tau_{k-1}^{(0)}(0),}$$
where $k'=n+1-k.$ If $\{\zeta_p(t),c_p(t)\}_{j=1}^n$ evolve as 
 \[
 \dot \zeta_p=0,\qquad \dot c_p=\frac{c_p}{\zeta_p},
 \]
then $\{x_k(t),m_k(t)\}_{k=1}^n$  satisfy the Novikov peakon ODEs:
\begin{align*}
&\dot x_k=\left(\sum_{j=1}^nm_je^{-|x_j-x_k|}\right)^2,\\
&\dot m_k=\left(\sum_{j=1}^nm_je^{-|x_j-x_k|}\right)\left(\sum_{j=1}^n\sgn(x_k-x_j)m_je^{-|x_j-x_k|}\right).
\end{align*}
\item 
 Introduce the variables $\{\{u_k(0)\}_{k=1}^{n-1},\{b_{k}(0)\}_{k=0}^{n-1}\}$ defined by
\begin{align*}
&u_k(0)=\frac{\tau_{k+1}^{(0)}(0)\tau_{k-1}^{(0)}(0)}{({\tau_{k}^{(0)}(0)})^2},\\
& b_k(0)=\frac{\tau_{k+1}^{(0)}(0)\tau_{k-1}^{(0)}(0)}{({\tau_{k}^{(0)}(0)})^2}+\frac{\tau_{k+2}^{(0)}(0)\tau_{k}^{(0)}(0)}{({\tau_{k+1}^{(0)}(0)})^2}+\frac{\left(\tau_{k+1}^{(0)}(0)\right)^2W_{k-1}^{(0)}(0)}{\left(\tau_{k}^{(0)}(0)\right)^2W_{k}^{(0)}(0)}+\frac{\left(\tau_{k}^{(0)}(0)\right)^2W_{k+1}^{(0)}(0)}{\left(\tau_{k+1}^{(0)}(0)\right)^2W_{k}^{(0)}(0)}.
\end{align*}
 If $\{\zeta_p(t),c_p(t)\}_{j=1}^n$ evolve as 
 \[
 \dot \zeta_p=0,\qquad \dot c_p=\zeta_pc_p,
 \]
 then  $\{\{u_k(0)\}_{k=1}^{n-1},\{b_{k}(0)\}_{k=0}^{n-1}\}$  satisfy the finite B-Toda lattice \eqref{eq:btoda} with $u_0=u_n=0$.
 \item There exists a mapping from $\{x_k(0),m_k(0)\}_{k=1}^n$ to  $\left\{\{u_k(0)\}_{k=1}^{n-1},\{b_{k}(0)\}_{k=0}^{n-1}\right\}$ according to
 \begin{align*}
 &u_k(0)=\frac{1}{2m_{k'}(0)m_{k'-1}(0)\cosh x_{k'}(0)\cosh x_{k'-1}(0)(\tanh x_{k'}(0)-\tanh x_{k'-1}(0))}, \\
 &b_k(0)=u_k(0)\left(1+\frac{m_{k'}(0)}{m_{k'-1}(0)}e^{x_{k'-1}(0)-x_{k'}(0)}\right)+u_{k+1}(0)\left(1+\frac{m_{k'-2}(0)}{m_{k'-1}(0)}e^{x_{k'-1}(0)-x_{k'-2}(0)}\right)
\end{align*}
with the convention
$$ x_0(0)=-\infty,\quad x_{n+1}(0)=+\infty.$$
Actually, this item partially reveals the relevance of the corresponding spectral problems of the Novikov peakon and B-Toda lattices.

 \end{enumerate}
 \end{theorem}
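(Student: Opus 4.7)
The statement comprises three parts, and the paper asserts that all three are consequences of Theorems 3.7, 3.9, and 3.10 in \cite{chang2017application}. My plan is accordingly to invoke those results for parts (1) and (2) and to focus the substantive work on part (3), which realises the nonlinear change of variables between the Novikov peakon coordinates $\{x_k,m_k\}$ and the B-Toda coordinates $\{u_k,b_k\}$.

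For (1) and (2) I would cite the relevant theorems of \cite{chang2017application}, with two caveats. First, the paper there works with the modified B-Toda lattice, so I must apply the Miura transformation mentioned immediately before the theorem to pass to the B-Toda form \eqref{eq:btoda} used here. Second, the Pfaffian positivity flagged in the footnote is a known consequence of the positivity of the underlying weights and may be cited from \cite{chang2017application,hone2009explicit}.

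For part (3), the plan is to pass both sides into the common language of the Pfaffians $\tau_k^{(l)}$ and the quantities $W_k^{(l)}=\tau_k^{(l+1)}\tau_k^{(l)}-\tau_{k-1}^{(l+1)}\tau_{k+1}^{(l)}$. From (1) one reads off the basic relations
\[
e^{2x_{k'}}=\frac{W_k^{(-1)}}{W_{k-1}^{(0)}},\qquad m_{k'}=\frac{\sqrt{W_k^{(-1)}\,W_{k-1}^{(0)}}}{\tau_k^{(0)}\tau_{k-1}^{(0)}},
\]
and, substituting $k\mapsto k+1,k+2$ (using $(k+j)'=k'-j$), the corresponding expressions at the neighbouring peakon indices. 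Combining these with the trigonometric identity $\cosh x_{k'}\cosh x_{k'-1}(\tanh x_{k'}-\tanh x_{k'-1})=\sinh(x_{k'}-x_{k'-1})$, the asserted formula for $u_k$ reduces to a single bilinear identity expressing $W_k^{(-1)}W_k^{(0)}-W_{k-1}^{(0)}W_{k+1}^{(-1)}$ as a monomial in the $\tau_k^{(l)}$ that cancels against $\tau_{k+1}^{(0)}\tau_{k-1}^{(0)}/(\tau_k^{(0)})^2$. The formula for $b_k$ is handled similarly: one splits the right-hand side into its four summands and matches each against the four terms in the definition of $b_k$ in (2), the translation being mediated by the ratios $m_{k'}/m_{k'-1}$ and $e^{x_{k'-1}-x_{k'}}$, which unwind cleanly to ratios of Pfaffians. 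The boundary conventions $x_0=-\infty$ and $x_{n+1}=+\infty$ correspond to vanishing of the appropriate Pfaffians at the endpoints and automatically produce $u_0=u_n=0$.

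The main obstacle I expect is isolating the precise Pfaffian bilinear identity of Jacobi--Pl\"ucker type that links $W_k^{(l)}$ and $\tau_k^{(l)}$ at shifted indices; this plays the role for the Novikov/B-Toda correspondence that the determinant Jacobi identity \eqref{id:jacobi} plays for the CH/A-Toda correspondence in Appendix \ref{app:CHA}, but now at the Pfaffian level. The relevant structural relations should already appear in \cite{chang2017partial,chang2017application} and can be cited rather than re-derived, so the remaining work reduces to careful bookkeeping in Pfaffian algebra and in the shift $k\leftrightarrow k'=n+1-k$.
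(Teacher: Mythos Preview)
Your proposal is essentially aligned with the paper's own proof: parts (1) and (2) are handled by citation (the paper invokes \cite{hone2009explicit} for the Novikov peakon formulae and \cite{chang2017application} plus the Miura link between the modified and unmodified B-Toda lattices, just as you do), and for part (3) the paper simply says ``it works by a direct calculation'' without further detail. Your explicit roadmap for (3)---rewriting everything in terms of $\tau_k^{(l)}$ and $W_k^{(l)}$, using $\cosh x_{k'}\cosh x_{k'-1}(\tanh x_{k'}-\tanh x_{k'-1})=\sinh(x_{k'}-x_{k'-1})$, and reducing to a Pfaffian bilinear identity---is a reasonable fleshing-out of that ``direct calculation'' and does not deviate in spirit from the paper.
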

 \begin{proof}
The first item is the subject of the multipeakon formulae of the Novikov equation in \cite{hone2009explicit}. The second one can be proved by noticing that in the B-Toda lattice, $b_k$ admits the solution of the form 
$$b_k=\frac{d}{dt}\log\tau_{k+1}^{(0)}- \frac{d}{dt}\log\tau_{k}^{(0)},$$
which can yield the required formulae by using the modified B-Toda lattice together with its solution in \cite{chang2017application}. For the third one, it works by a direct calculation. 
 \end{proof}
% \section{On the Lax pair of C-Toda lattice}
% This appendix is to give a description on how to derive the Lax pair of C-Toda lattice
%\section{A,B,C-Toda lattices}\label{app:DPC}

%\begin{table}[htbp]
%\caption{A comparison }
%\begin{tabular}{llll}\hline\hline
%Wave functions  & moments & Toda lattices    &  Peakon lattices \\ 
%&&($\dot{\varphi}_i=\varphi_{i+1}$) &($\dot{\varphi}_i=\varphi_{i-1}$)\\ \hline
%OPs&$c_{i,j}=\varphi_{i+j}$&A-Toda& CH \\
%PSOPs&$c_{i,j}=\int_{-\infty}^tD_t\varphi_i\cdot\varphi_jdt$&B-Toda&NV \\
%CBOPs&$c_{i,j}=\int_{-\infty}^t\varphi_i\varphi_jdt$&C-Toda &DP\\
%\hline\hline
%\end{tabular}
%\label{comp_abc}
%\end{table}

%\begin{table}[htbp]
%\caption{A comparison between CH, NV, DP peakon and A,B,C-Toda lattices }
%\begin{tabular}{|c|c|c|c|}
%\hline
%\multirow{2}{*}{Wave functions }&\multirow{2}{*}{Orthogonality  }&  
%\multicolumn{2}{c|}{ $t$-deformation}\cr\cline{3-4}&&
%$\mu (x;t)=e^{tx}\mu(x;0)$&$\mu (x;t)=e^{\frac{t}{x}}\mu(x;0)$\\ \hline
%OPs&$\int P_n(x)P_m(x) \mu (x)dx $&A-Toda& CH \\
%PSOPs&$\iint \frac{x-y}{x+y}P_n(x)P_m(y) \mu (x)\mu (y)dxdy$&B-Toda&NV \\
%CBOPs&$\iint \frac{P_n(x)P_m(y)}{x+y}\mu (x)\mu (y) dxdy$&C-Toda &DP\\
%\hline
%\end{tabular}
%\label{comp_abc}
%\end{table}

\end{appendix}
    
\small
\bibliographystyle{abbrv}
\bibliographystyle{plain}
%\bibliography{/Users/cxk/chang1608/chang_document/research_1608/papers/me/paper_preparing/1111_bibtex/SZ,/Users/cxk/chang1608/chang_document/research_1608/papers/me/paper_preparing/1111_bibtex/peakon702,/Users/cxk/chang1608/chang_document/research_1608/papers/me/paper_preparing/1111_bibtex/IS702,/Users/cxk/chang1608/chang_document/research_1608/papers/me/paper_preparing/1111_bibtex/OPandIS702,/Users/cxk/chang1608/chang_document/research_1608/papers/me/paper_preparing/1111_bibtex/hankel702,/Users/cxk/chang1608/chang_document/research_1608/papers/me/paper_preparing/1111_bibtex/extrapolation702}
%

\def\cydot{\leavevmode\raise.4ex\hbox{.}}
  \def\cydot{\leavevmode\raise.4ex\hbox{.}} \def\cprime{$'$}

\end{document}